\newtheorem{corollary}{Corollary}
\theoremstyle{plain}
\newtheorem{thm}{\protect\theoremname}
\theoremstyle{plain}
\newtheorem{lem}[thm]{\protect\lemmaname}
\theoremstyle{plain}
\providecommand{\lemmaname}{Lemma}
\providecommand{\propositionname}{Proposition}
\providecommand{\theoremname}{Theorem}
\providecommand{\lemmaname}{Lemma}
\providecommand{\propositionname}{Proposition}
\providecommand{\theoremname}{Theorem}
\begin{document}
	\title{Stochastic Geometry Analysis of IRS-Assisted Downlink Cellular Networks}
	\author{Taniya~Shafique, Hina Tabassum, and Ekram Hossain\thanks{T. Shafique and E. Hossain are with the Department of Electrical and Computer Engineering, University of Manitoba, Canada. H. Tabassum is with the Department of Electrical Engineering and Computer Science, York University, Canada.} 
	}
	\maketitle
	\begin{abstract}
	Using stochastic geometry tools, we develop a comprehensive framework to analyze the downlink coverage probability, ergodic capacity, and energy efficiency (EE) of various types of users (e.g., users served by direct base station (BS) transmissions and indirect intelligent reflecting surface (IRS)-assisted transmissions) in a cellular network with multiple BSs and IRSs. The proposed stochastic geometry framework can capture the impact of channel fading, locations of BSs and IRSs, arbitrary phase-shifts  and  interference experienced by a typical user supported by direct transmission and/or IRS-assisted transmission. For IRS-assisted transmissions, we first model the desired signal power from the nearest IRS as a sum of scaled generalized gamma (GG) random variables whose parameters are functions of the IRS phase shifts. Then, we derive the Laplace Transform (LT)  of the received signal power in a closed form. Also, we model the aggregate interference from  multiple IRSs as the sum of normal random variables. Then, we derive the  LT of the aggregate interference from all IRSs and BSs. The derived LT expressions are used to calculate coverage probability, ergodic capacity, and EE for  users served by direct BS transmissions as well as  users served by IRS-assisted transmissions. Finally, we derive the overall network coverage probability, ergodic capacity, and EE based on the fraction of direct and  IRS-assisted users, which is defined as a function of the deployment intensity of IRSs, as well as blockage probability of direct transmission links. Numerical results validate the derived analytical expressions and extract useful insights related to the number of IRS elements, large-scale deployment of IRSs and BSs, and the impact of IRS interference on direct transmissions. 
	\end{abstract}
	
	\begin{IEEEkeywords}
	Intelligent reflecting surfaces, phase-shifts, stochastic geometry, interference, ergodic capacity, coverage probability, energy-efficiency, Laplace transform.
	\end{IEEEkeywords}
	
	\section{Introduction}
	\label{sec:Intro}
Intelligent reflecting surfaces (IRSs) are  considered as a key enabling technology for the sixth generation (6G) wireless communications systems. IRSs enable a smart manipulation of the wireless propagation environment \cite{wu2019towards,wu2021intelligent}. Each IRS consists of many antenna elements (a.k.a IRS elements) \cite{huang2019reconfigurable}   and each IRS element is controlled via a controller that assists each  IRS element to steer the incident  signal into the desired direction \cite{liaskos2018new}.
Also, with the advances in the wireless technology and extravagant demand of higher data rate to millions of indoor/outdoor devices, it has become inevitable to utilize the resources wisely to enable massive connectivity. In this context,  IRSs operate as a low cost solution to extend the communication range and to provide service to more users. In order to achieve this goal, the transmissions can happen  in three modes, i.e., (i)~\textit{Joint Transmission:} in which a user receives the IRS signals  combined with the direct signal from the base-stations (BSs),  (ii)~\textit{IRS-only Transmission:} in which a user receives only IRS transmissions and the direct transmissions get blocked, and (iii)~\textit{Direct Transmission:} in which a user gets served only through direct transmissions. 

It is noteworthy that combining the signals coming from the direct and indirect IRS-assisted path may suffer from  incoherent multi-path delays and it may necessitate sophisticated synchronization, detection, and co-phasing techniques resulting in complex hardware/software design. Furthermore, the impact of IRS transmissions is generally more understandable in the absence of direct link; therefore, it is crucial to investigate the significance of  IRS-only transmissions without direct links. Similarly, the fact that the direct transmissions from BSs may be impacted by the presence of IRSs, it is important to study the performance of direct transmissions in a large-scale IRS-assisted network comprehensively. 
In this paper, we develop a novel framework to analyze the performance of various types of transmissions in a   multi-BS, multi-IRS network. 


\subsection{Background Work}
To date, there have been a number of research works that considered the performance analysis of IRS-assisted communication systems assuming a single IRS, single source and  destination \cite{ basar2019wireless,kudathanthirige2020performance, yang2020accurate, boulogeorgos2020performance, trigui2020comprehensive, 9316943,tao2020performance, van2021outage}.
For instance,  the authors in \cite{basar2019wireless} applied the central limit theorem (CLT) to derive the approximate symbol error  probability expressions under independent Rayleigh fading channels.   In \cite{kudathanthirige2020performance}, the authors derived the approximate outage  probability, symbol error rate, and upper and lower bounds on ergodic capacity by applying CLT and assuming uncorrelated Rayleigh fading channels. Later, \cite{yang2020accurate} derived the average bit error rate, capacity, and outage probability with  Rayleigh fading. In \cite{boulogeorgos2020performance}, the exact outage probability, symbol error rate, and ergodic capacity expressions  under Rayleigh fading  were derived. In  \cite{trigui2020comprehensive} and \cite{9316943}, using moment generating function (MGF)-based approach, the exact outage probability was derived considering \textit{Nakagami-m} and generalized fading channels, respectively. The direct link transmission was ignored in \cite{basar2019wireless,kudathanthirige2020performance,trigui2020comprehensive, yang2020accurate}; however \cite{9316943} considered both the direct and IRS-assisted transmissions. In \cite{tao2020performance}, the authors derived the outage probability and ergodic capacity expressions considering IRS  link  modeled with Rician fading and direct channel modeled as Rayleigh fading. The joint direct and IRS-assisted transmission was considered.

\textbf{All of the aforementioned research works were limited to single IRS, single source, and single destination under varying fading channels. That is, the impact of interference is ignored and the IRS is deployed at a fixed location. Furthermore, the analyses assume optimal phase-shifts and apply CLT, which simplifies the cascaded signal model substantially}. Recently, in \cite{van2021outage}, using moment matching method, the authors derived the outage probability and capacity expressions under correlated Rayleigh fading channel while considering arbitrary phase shifts.  This work considered both the direct and IRS-assisted links; however, again the framework was limited to a single IRS, single source, and single destination. 
Another work is \cite{peng2021analysis} where multi-pair D2D network is considered with a single IRS and the authors derive average achievable rate expressions assuming arbitrary phase-shifts. However, the authors considered approximating the signal and interference power with their respective statistical averages.

Another series of research works considered  multiple IRSs, single source, and single destination \cite{yang2020outage,galappaththige2020performance, do2021multi}. In \cite{yang2020outage},  the authors derived the outage probability  considering Rayleigh fading with the direct transmission blocked.  The transmission is conducted by only one IRS that provides the maximum SNR. Instead of applying CLT, the authors proposed Generalized-$K$ approximation. In \cite{galappaththige2020performance}, the authors applied CLT to derive the outage probability and rate considering Nakagami-$m$ fading. Both the direct and indirect transmissions were considered. Similarly, in \cite{do2021multi}, the authors  derived the outage probability  by approximating the end-to-end IRS-assisted channel with the log-normal and gamma distribution. 

\textbf{The aforementioned research works ignored interference from IRSs, assumed a single BS, and ideal phase shifts were assumed. }
Recently, a couple of research works considered a realistic multi-IRS set-up with multiple BSs  \cite{zhu2020stochastic}. In \cite{zhu2020stochastic}, the authors derived the average achievable rate of the IRS-assisted multi-BS network and derived  the Laplace Transform (LT) of the aggregate interference from all BSs and IRSs. However, the  interference from all BSs to a specific  IRS is replaced by its average value. The resulting rate expressions require four-fold integral evaluations.
Another relevant research work is  \cite{lyu2020hybrid} where  the authors derived the coverage probability expressions considering joint direct and IRS transmission. The signal power is approximated with the Gamma distribution and approximated  the interference from all IRS with the  mean IRS interference.  
Both of the aforementioned works \cite{zhu2020stochastic,lyu2020hybrid} assumed optimal phase-shifts in the desired signal and interference which makes the application of CLT possible. 

\subsection{Paper Contribution and Organization}


In this paper, we develop a comprehensive framework to analyze the coverage probability and rate of various types of users (e.g., users performing direct transmissions and indirect IRS-assisted transmissions) in a realistic large-scale multi-BS, multi-IRS network. The proposed framework can capture the impact of arbitrary phase-shifts on the received signal power as well as the aggregate interference from all IRSs on users that are served by direct transmissions from BS or IRS-assisted transmissions. More specifically, we have the following main contributions:

\begin{itemize}
\item For IRS-assisted downlink transmissions, we  characterize the desired signal power from the nearest IRS as a sum of scaled generalized gamma (GG) random variables whose parameters are a function of the IRS phase shifts. Then, we derive the novel LT expression and validate its accuracy considering both the optimized and randomized  phase-shifts of the IRS. 
\item We characterize the aggregate interference from  multiple IRSs in a multiple BS scenario as the sum of normal random variables. Then, we derive the  LT of the aggregate interference from all IRSs. The derived expressions can be customized for both types of users, i.e., those served by direct BS transmissions and those served by IRS-assisted transmissions.
\item Based on the LT expressions, we characterize the coverage probability, ergodic capacity, and energy-efficiency of both the IRS-assisted users and direct users.
\item Finally, we derive the overall coverage probability, ergodic capacity, and energy efficiency based on the fraction of direct and indirect IRS-assisted users in the network. This fraction is derived as a function of the \textbf{(i) }deployment intensity of IRSs as well as  \textbf{(ii)} blockage probability of direct transmission links.
\item The analytical results are validated by Monte-Carlo simulations. Numerical results extract useful insights related to the impact of IRS interference on IRS-assisted as well as direct transmissions in a large-scale network as a function of the number of IRS elements, intensity of IRSs and BSs, and the transmit power of BSs.
\end{itemize}  


The remainder of the paper is organized as follows. Section~II describes the  system model and assumptions and the methodology of analysis. We characterize the statistics of the received signal power, aggregate interference and  the corresponding LT of users supported by IRS in Section~III and Section~IV, respectively. The coverage probability, ergodic capacity and energy efficiency of users supported by IRS transmissions and users supported by direct transmission, also the overall coverage of the network and achievable data rate is provided in  Section~V. Then, in Section~VI, we present selected numerical results  followed by conclusions in Section~VII. A list of the major notations is presented in {\bf Table~1}.
	\section{System Model and Assumptions}
	\label{sec:SystemModel}
In this section, we present the network, transmission,  signal and interference models for users who are served by direct BS transmissions and those served by IRS-assisted transmissions, and also our methodology for large-scale analysis of the system.

	\subsection{Network Deployment  and Transmission Model}
	We consider a two-tier downlink cellular network consisting of IRS surfaces, BSs, and users within a coverage area of radius $R$. The locations of the BSs follow a homogeneous Poisson Point Process (PPP) denoted as  $\Phi_{\rm B}$ with intensity $\lambda_{B}$, whereas the locations of the IRSs follow  Binomial Point Process (BPP) in which $M$ IRSs are distributed uniformly in the coverage region. {For simplicity, we refer to $\lambda_R=\frac{M}{\pi R^2}$ as the IRS intensity throughout the paper}. We assume that  the IRSs are deployed at a fixed height $H_{\rm R}$ and are equipped with $N$ elements each, whereas all the BSs have a fixed height $H_B$.  We assume that there are two different types of users in the considered multi-BS and multi-IRS network, i.e.,
	\begin{itemize}
	    \item {\em Direct users:} who are served by direct BS transmissions, and 
	    \item {\em IRS-assisted users:} who are served by indirect IRS-assisted  transmissions. 
	\end{itemize}
	The typical user who is deployed at origin would reflect the performance of any user within the coverage region.  We also consider $\mathcal{A}$  IRS-assisted users and $1-\mathcal{A}$ direct users in the system. For direct transmission from the BS, the typical user is associated to the nearest BS. In the indirect IRS-assisted transmission mode, the user associates to the nearest IRS, and then, that  nearest IRS associates to the nearest BS (as illustrated in Fig.~1). 	
	
	We assume that an IRS can relay information from only one BS to only one user at a predefined time/frequency resource to maintain orthogonality. 	We consider that the \textit{direct} communication  (i.e., BS to the typical user) and \textit{indirect} IRS-assisted communication  (i.e., BS to IRS and IRS to the typical user) share different frequency spectrum such that a BS can serve both the direct and indirect IRS-assisted users. 
	

		
	\begin{figure}
		\begin{center}			\includegraphics[scale=0.85]{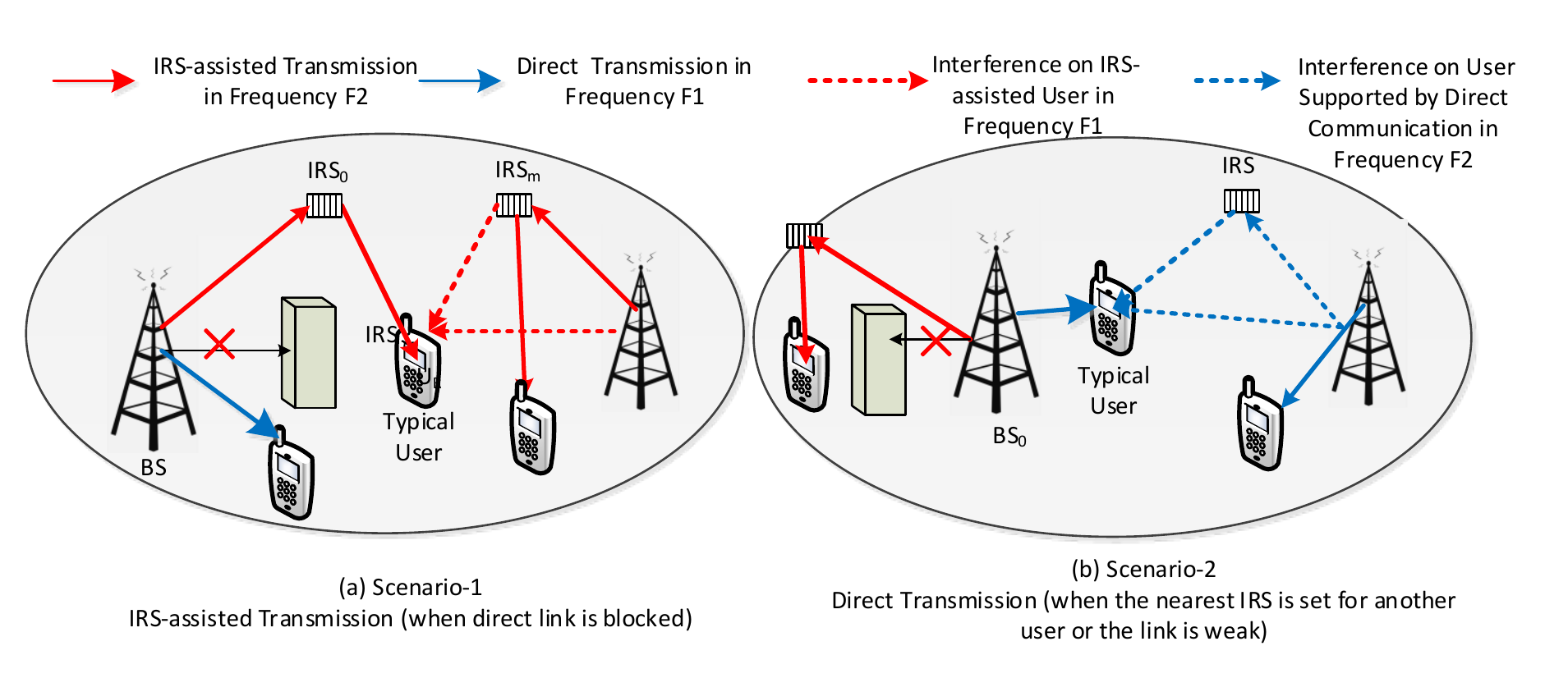}
			\caption{System model for direct and IRS-assisted  communication in multi-IRS and multi-BS setup: (a) {Scenario 1: when the user is connected with a BS through nearest IRS and the direct link to the nearest BS is blocked,} and (b) Scenario 2: when the user is connected with a nearest BS in the presence a weak IRS link.  }			\label{fig:Tania2}
		\end{center}
	\end{figure}

\begin{table*}[!h]
\centering
\small
\caption{{\color{black} Mathematical notations} }
\resizebox{\textwidth}{!}{\begin{tabular}{|c|c|c|c|}
\hline
  \textbf{Notation}&\textbf{Description} & \textbf{Notation}&\textbf{Description}\\ \hline
   \footnotesize	$R$  &\footnotesize Coverage radius  & \footnotesize $N$&\footnotesize Number of IRS elements\\ \hline
   \footnotesize	 $M$&\footnotesize Total number of IRSs  & \footnotesize $\lambda_{\rm B}$&\footnotesize BS intensity\\ \hline
    \footnotesize     $\Phi_{\rm B}$&\footnotesize PPP for BSs&  $ H_{\rm B}$,$H_{\rm R}$ &  \footnotesize BS height, IRS height\\ \hline
  \footnotesize $\alpha$ &\footnotesize  Path-loss exponent & $BS_{\rm j}$& \footnotesize   j-th interfering BS to the typical user in  direct mode \\ \hline  $BS_{\rm 0}$& \footnotesize   Nearest BS to the typical user in  direct mode & $\ell_{\rm j}$ &\footnotesize  Direct distance of  $BS_{\rm j}$ in 2D\\ \hline
   $d_{\rm j}$ &\footnotesize  Direct distance of  $BS_{\rm j}$ in 3D & $\ell_{\rm 0}$ &\footnotesize  Direct distance of nearest $BS_{\rm 0}$ in 2D\\ \hline
  \footnotesize $d_{\rm 0}$ &\footnotesize  Direct distance of  $BS_{\rm j}$ in 3D & $h_{\rm j}$ & \footnotesize  Fading channel between typical user and  $BS_{\rm j}$   \\ \hline
  \footnotesize $h_{\rm 0}$ &\footnotesize  Fading channel between typical user and $BS_{\rm 0}$  & $\hat{P},P$ &\footnotesize  Transmission power of BSs in direct and IRS-assisted mode \\ \hline
  \footnotesize IRS$_{\rm m_n}$ & \footnotesize  The $n$-th element of interfering IRS $m$ & IRS$_{\rm 0_n}$ & The $n$-th element of nearest IRS$_0$ to  user \\ \hline$\vert{ g}_{0,{m_n}}\vert ,{\phi_{0,m_n}} $ &\footnotesize Fading magnitude and phase  from the user to the IRS$_{\rm m_n}$ & $r_{0,m_n} $ &\footnotesize Distance of the user to the IRS$_{\rm m_n}$    \\ \hline
  \footnotesize $ \vert{ f}_{{m_n,j}}\vert ,{\psi_{m_n,j}}$ &\footnotesize Fading magnitude and phase from IRS$_{\rm m_n}$ to $BS_{\rm j}$ & ${t_{m_n,j}} $ &\footnotesize  Distance from IRS$_{\rm m_n}$ to $BS_{\rm j}$  \\ \hline
  \footnotesize $ \Theta_{m_n}$ &\footnotesize   Phase shift of  IRS$_{\rm m_n}$  & 
  \footnotesize $ \Theta_{0_n}$ &\footnotesize   Phase shift of  IRS$_{\rm 0_n}$   \\ \hline$\vert{ g}_{0,{0_n}}\vert ,{\phi_{0,0_n}} $ &\footnotesize Fading magnitude and phase  from the user to the IRS$_{\rm 0_n}$ & $r_{0,0_n} $ &\footnotesize Distance of the user to the IRS$_{\rm 0_n}$    \\ \hline
  \footnotesize $ \vert{ f}_{{0_n,j}}\vert ,{\psi_{0_n,j}}$ &\footnotesize Fading magnitude and phase from IRS$_{\rm 0_n}$ to $BS_{\rm j}$ & ${t_{0_n,j}} $ &\footnotesize  Distance from IRS$_{\rm 0_n}$ to $BS_{\rm j}$  \\ \hline
  \footnotesize $I_B $ &\footnotesize  Aggregate interference from all BSs in IRS-assisted mode  & $I_R $ &\footnotesize Aggregate interference from all IRS in IRS-assisted mode    \\ \hline  \footnotesize $\hat{I}_B $ &\footnotesize  Aggregate interference from all BSs in direct mode    & $\hat{I}_R $ &\footnotesize  Aggregate interference from all IRS in direct mode  \\ \hline
  \footnotesize $S_{D_0} $ &\footnotesize 
  Received signal power from BS$_0$  & $S_{R_0} $ &\footnotesize  
  Received signal power from IRS$_0$  \\ \hline
  \footnotesize $ C_D$ &\footnotesize Coverage probability in the direct transmission mode   & $ C_{ID}$ &\footnotesize Coverage probability in the IRS-assisted mode   \\ \hline
  \footnotesize $ \gamma_D$ &\footnotesize Direct mode SINR  & $ \gamma_{ID} $ &\footnotesize Direct mode SINR   \\ \hline
  \footnotesize $\tau $ &\footnotesize SINR threshold   & $\mathcal{A} $ &\footnotesize Fraction of users assisted by IRS  \\ \hline
  \footnotesize $X_{G}(\kappa,\zeta )$ &\footnotesize Gamma RV and parameters   & $ X_{GG}( a,d,p)$ &\footnotesize Generalized gamma RV  and parameters \\ \hline
  \footnotesize $ R_D$ &\footnotesize Rate achieved in direct mode   & $ R_{ID}$ &\footnotesize Rate achieved in indirect mode  \\ \hline
  \footnotesize $t_{j}$ &\footnotesize Approximation of $t_{m,j}$   & $N_0$ &\footnotesize Noise power spectral density\\ \hline
  \footnotesize $p_{\rm BS},p_{\rm U}$ &\footnotesize Static power consumption of BS and user   & $p_{\rm  IRS} $ &\footnotesize IRS power consumption \\ \hline
  \footnotesize $p_{\rm ID}$ &\footnotesize Total system power consumption per user in ID mode & $p_{\rm D}$ &\footnotesize Total system power consumption per user in direct mode \\ \hline 
  \footnotesize $\mathcal{A}$ &\footnotesize Fraction of user associated with IRS-assisted ID communication  & $\beta $ &\footnotesize    Reference  channel power gain on  free space path loss at 1-meter distance\\\hline
\end{tabular}}
\label{Notation_Summary_mmwave}
\end{table*}   

\subsection{Signal and Interference Models (IRS-Assisted Users)}
\subsubsection{Desired Signal Power}
The signal power received at the typical user from the nearest IRS (IRS$_0$) is given as \cite{bjornson2020power,peng2021analysis}: 
\begin{align}
\label{eq:desiredInDirect1}
\begin{split}
S_{ R_0}= &P\;\vert\mathbf{ \hat g}^{H}_{0,0}  {\Theta_{0}} \;\mathbf{\hat f}_{0,{j}} \vert^2
= P \vert\sum_{n=1}^{N} C_{0_n,j} f_{0_n,j}  g_{0,0_n} {e^{j\theta_{0_n}} }\vert^2, 
\end{split}
\end{align}
	where $P$ is the transmission power of the  BSs in IRS-assisted mode,  $g_{0,0_n}=\vert{ g}_{0,{0_n}}\vert e^{-j\phi_{0,0_n}}$ is the Rayleigh fading channel gain from the typical user to the $n$-th element of IRS$_0$, thus ${\hat g}_{0,0_n}=\beta \left({r_{0,0_n}}\right)^{-\alpha/2} {g}_{0,0_n}$, where $\alpha\ge 2$ represents the path-loss exponent,  $\beta=\left(\frac{4 \pi f_c}{c}\right)^{-2}$ is  the  channel power gain on  free-space 
	path-loss model at a reference distance of one meter, $f_c$ is carrier frequency, and $c$ represents the speed of light,  and 
${{\bf \hat  g}}_{0,0} \in \mathbb{C}^{1\times N}$, where  $r_{0,0_n}=\sqrt{{\ell}_{0_n}^2+H_R^2}$ represents
the distance from the $n$-th element of the IRS$_0$ to the typical user. Note that $\vert g_{{0,0_n}}\vert$ and $\phi_{0,0_n}$ represent the magnitude and phase component of the fading  channel from the $n$-th element of IRS$_0$ to the typical receiver. Similarly, ${ {f}}_{0_n,j}= \vert{ f}_{{0_n,j}}\vert e^{-j\psi_{0_n,j}}$ is the fading channel gain from the  $n$-th element of IRS$_0$ to $j$-th BS, thus ${\hat f}_{0_n, j}= \beta \left({t_{0_n, j}}\right)^{-\alpha/2} {f}_{0_n, j}$ and ${{\bf \hat  f}}_{0,j} \in \mathbb{C}^{N\times 1}$, where   {$${t_{0_n,j}}=\sqrt{{r^2_{0,0_n}}+d_j^2-2r_{0,0_n} d_j \cos(\angle t_{0_n,j} )}$$ represents
the distance from the $n$-th element of IRS$_0$ to the typical user, where $\angle t_{0_n,j}$ denotes the  angle opposite to $t_{0_n,j}$. }
Note that $\vert f_{{0_n,j}}\vert$ and   $\psi_{0_n,j}$ represent
the magnitude and phase component of the fading channel from $j$-th BS to $n$-th element of  IRS$_0$.  Finally, $\Theta_{0}$ denotes the phase shift of the IRS$_0$ and $\Theta_{0}={\rm{diag}}\{e^{j\theta_{0_1}},e^{j\theta_{0_2}},\cdots,e^{j\theta_{0_N}}\}$, and $\;C_{0_n,j}=\left({r_{0,0_n}}{t_{0_n,j}}\right)^{-\alpha/2}$.  

\subsubsection{Interference Power} The interference at a typical user in the IRS-assisted mode is composed of two parts (i) interference from the BSs, and (ii) interference from the IRSs. The aggregate interference from all the  BSs (excluding the nearest BS) is given as follows:
 		\begin{equation}
	    I_{B}=\sum_{j\in \Phi_{B}\backslash 0 }P \beta^2\vert h_{j}\vert ^2 d^{-\alpha}_{j}= \sum_{j\in \Phi_{B}  \backslash 0} P \beta^2\vert h_{j}\vert ^2 (\ell_j^2+H_B^2)^{-\alpha/2},
	\end{equation}
		 On the other hand, the aggregate interference from the IRSs can be modeled as follows:
\begin{align}\label{eq:IRSINt2}
\begin{split}
 I_R =&\sum_{j\in \Phi_{B} } \sum_{m= 1}^{M \backslash 0} P\;\vert\mathbf{ \hat g}^{H}_{0,m}  {\Theta_{m}} \;\mathbf{\hat f}_{m,{j}} \vert^2
=\sum_{j\in \Phi_{B}} \sum_{m= 1}^{M \backslash 0} P \vert\sum_{n=1}^{N} C_{m_n,j} f_{m_n,j} g_{0,m_n} {e^{j\theta_{m_n}} }\vert^2,
\end{split}
\end{align}
where $ g_{0,m_n}=\vert{ g}_{0,{m_n}}\vert e^{-j\phi_{0,m_n}}$ is the fading channel gain from the typical user to the $n$-th element of IRS $m$, thus ${\hat g}_{0,m_n}=\beta \left({r_{0,m_n}}\right)^{-\alpha/2} {g}_{0,m_n}$ and 
${{\bf \hat  g}}_{0,m} \in \mathbb{C}^{1\times N}$, where  $r_{0,m_n}=\sqrt{{\ell}_{m_n}^2+H_R^2}$ represents
the distance from $n$-th element of $m$-th IRS to the typical user. 
Note that $\vert g_{{0,m_n}}\vert$ and $\phi_{0,m_n}$ represent the magnitude and phase component of the fading  channel from $n$-th element of   $m$-th IRS to the typical receiver. 
Similarly, ${ {f}}_{m_n,j}= \vert{ f}_{{m_n,j}}\vert e^{-j\psi_{m_n,j}}$ is the fading channel gain from the  $n$-th element of IRS $m$ to $j$-th BS, thus ${\hat f}_{m_n, j}= \beta \left({t_{m_n, j}}\right)^{-\alpha/2} {f}_{m_n, j}$ and ${{\bf \hat  f}}_{m,j} \in \mathbb{C}^{N\times 1}$, where  {${t_{m_n,j}}=\sqrt{{r^2_{0,m_n}}+d_j^2-2r_{0,m_n} d_j \cos(\angle t_{m_n,j} )}$ represents
the distance from $n$-th element of $m$-th IRS to the typical user where $\angle t_{m_n,j}$ denotes the  angle opposite to $t_{m_n,j}$. }
Note that $\vert f_{{m_n,j}}\vert$ and   $\psi_{m_n,j}$ represent
the magnitude and phase component of the fading channel from $j$-th BS to $n$-th element of $m$-th IRS. Finally, $\Theta_{m}$ denotes the phase shift of the IRS and $\Theta_{m}={\rm{diag}}\{e^{j\theta_{m_1}},e^{j\theta_{m_2}},\cdots,e^{j\theta_{m_N}}\}$, and $\;C_{m_n,j}=\left({r_{0,m_n}}{t_{m_n,j}}\right)^{-\alpha/2}$.     

	\subsection{Signal and Interference Models (Direct Mode)}
		\subsubsection{Desired Signal Power}
	The signal power from the desired BS  to the typical user is:
	\begin{equation}
	\label{SigPowerDirect}
S_{D_0}=\hat{P} \beta^2\vert h_{0}\vert ^2 d^{-\alpha}_{0} = \hat{P} \beta^2\vert h_{0}\vert ^2 ({\ell_0^2+H_B^2})^{-\alpha/2},
	\end{equation}
	where $\hat{P}$ is the transmission power of the  BSs in direct mode, 
	 $h_0$ and $d_0$ are the small scale fading channel    and the  distance between the typical user to the nearest BS, respectively. 
	\subsubsection{Interference Power} The interference at a typical user in the direct mode is composed of two parts (i) interference from the BSs, and (ii) interference from the IRSs. The aggregate interference from the BSs (excluding the desired BS) is given as follows:
	\begin{equation}
	\label{eq:D_forDirect}
	  \hat I_{B}=\sum_{j\in \Phi_{B} \backslash 0} \hat{P} \beta^2\vert h_{j}\vert ^2 d^{-\alpha}_{j}= \sum_{j\in \Phi_{B} \backslash 0} \hat{P} \beta^2\vert h_{j}\vert ^2 (\ell_j^2+H_B^2)^{-\alpha/2},
	\end{equation}
	where $h_j$ and $d_j$ are the small scale fading channel    and the  distance between the typical user to the nearest BS, respectively.  On the other hand, the aggregate interference from all IRSs can be modeled as follows:
\begin{align}\label{eq:IRSINt}
\begin{split}
\hat I_{R}=&\sum_{j\in \Phi_{B}} \sum_{m= 1}^{M} \hat{P}\;\vert\mathbf{ \hat g}^{H}_{0,m}  {\Theta_{m}} \;\mathbf{\hat f}_{m,{j}} \vert^2
=\sum_{j\in \Phi_{B}} \sum_{m= 1}^{M} \hat{P} \vert\sum_{n=1}^{N} C_{m_n,j} {f_{m_n,j} g_{0,m_n}} {e^{j\theta_{m_n}} }\vert^2.
\end{split}
\end{align}


	 \subsection{Power Consumption  Model}
	\label{subsec:PowerModel}
	
We consider $p_{BS}$ and $p_U$ as the static power consumption of BS and user, respectively. The transmission power of BS in the direct mode is $\hat{P}$ and indirect mode is ${P}$. The IRS is acting as a passive device and does not have any additional transmission power consumption. However, the IRS power consumption is associated  with the number of IRS elements and the phase resolution \cite{huang2019reconfigurable} and can be written as  $p_{\rm IRS} ={N P_r(b)}$, where $P_r(b)$ denotes the phase resolution power consumption. The power consumption of the finite phase resolution, for instance, for 6 bits is $P_r(6)=$ 78 mW which is much lower than the power consumption for infinite phase resolution $P_r(\infty)=$ 45 dBm (Fig.~4 of \cite{huang2018energy}).  Therefore, hardware power consumption increases with an increase in resolution and the number of IRS elements    as provided in  \cite{huang2018energy,huang2019reconfigurable}.  
The system power consumption per user in the IRS-assisted mode  $p_{\rm ID}$ is given as 
$p_{\rm ID}= p_{\rm BS}+p_{\rm U}+ P+p_{\rm  IRS},$
whereas the power consumption of direct mode $p_{\rm D}$ is given as 
$p_{\rm D}= p_{\rm BS}+p_{\rm U}+ \hat{P}.$
 \subsection{Methodology of Analysis}
To derive the coverage probability of different types of users in a large-scale IRS-assisted network, our methodology is as follows: 
\begin{itemize}
\item  ({\bf IRS-assisted User}) Model the received signal power $S_{\rm {R_0}}$ as a sum of scaled generalized gamma random variables and then  derive the LT of $S_{R_0}$ (\textbf{Section~III}).
\item ({\bf IRS-assisted User}) 
Derive the LT of the aggregate interference observed at a typical IRS-assisted user from all BSs, i.e., LT of $I_{\rm B}$. Then, we model the  aggregate interference observed at a typical IRS-assisted user from all IRSs as sum of normal random variables and derive its corresponding LT, i.e., LT of $I_{\rm R}$ (\textbf{Section~IV}).
\item Then, apply Gil-Pelaez inversion to obtain $C_{\rm ID}$ conditioned on the distance  $r_{0,0}$\footnote{We approximate $r_{0,0_n}\approx r_{0,0}$ since the distance between the typical user and different elements of the nearest IRS is almost the same, i.e., the distance between IRS elements is negligible compared to the distance between the nearest IRS and the typical  user. Similarly $t_{0_n,j} \approx t_{0,j}$, $r_{0,m_n}\approx r_{0,m}$, and $t_{m_n,j} \approx t_{m,j}$.}.
       \item  ({\bf Direct User}) Derive the LT of $\hat{I}_{\rm B}$ and $\hat{I}_{\rm R}$, i.e., $\mathcal{L}_{\hat{I}_{\rm B}} (s)$ and $\mathcal{L}_{\hat{I}_{\rm R}} (s)$, respectively, and obtain    $C_{\rm D}$ conditioned on distance $d_0$.
    \item Derive the ergodic capacity  using Hamdi's lemma \cite{hamdi2010useful} and energy-efficiency of typical IRS-assisted user and direct user.
\end{itemize}  

 \section{ Statistics of the Received Signal Power  (IRS-assisted Transmission)}
 
In what follows, we   model the received  power at a typical IRS-assisted user  $S_{\rm {R_0}}$ as a a sum of scaled generalized gamma random variables and  derive the  LT of $S_{\rm {R_0}}$ conditioned on  $r_{0,0}$.
\begin{lem}
\label{propID}
The desired signal power through nearest IRS $S_{R_0}(r_{0,0})$ in \eqref{eq:desiredInDirect1} can be modeled as a sum of scaled generalized gamma random variable as follows: 
\begin{align}
\label{eq:DesiredSingProp}
\begin{split}
    S_{R_0}=P {r^{-{\alpha}}_{0,0}}{t^{-{\alpha}}_{0,j}} \sum_{q=1}^{N^2}  \vert a_q \vert X_{GG_{q}} (\zeta^2, 0.5 \kappa, 0.5),
\end{split}
    \end{align}
    where  $a_q=\cos(\beta_{0_n}-\beta_{0_k}), \forall q=1,\cdots,n+k,\cdots,N^2$, $n=\{1,\cdots,N\}$, $k=\{1,\cdots,N\}$.
\end{lem}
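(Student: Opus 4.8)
The plan is to reduce the squared modulus in \eqref{eq:desiredInDirect1} to a real double sum and then recognize each summand as a scaled product of fading magnitudes that is matched to a generalized gamma law. First I would invoke the distance approximation $r_{0,0_n}\approx r_{0,0}$ and $t_{0_n,j}\approx t_{0,j}$ stated in the footnote, so that the geometric coefficient $C_{0_n,j}=(r_{0,0_n}t_{0_n,j})^{-\alpha/2}$ becomes common to all elements and factors out of the sum; after squaring this yields the prefactor $P\,r_{0,0}^{-\alpha}t_{0,j}^{-\alpha}$ appearing in \eqref{eq:DesiredSingProp}. Writing $g_{0,0_n}=|g_{0,0_n}|e^{-j\phi_{0,0_n}}$ and $f_{0_n,j}=|f_{0_n,j}|e^{-j\psi_{0_n,j}}$ and collecting the three phase contributions into a single net phase $\beta_{0_n}=\theta_{0_n}-\phi_{0,0_n}-\psi_{0_n,j}$, the bracket becomes $\sum_{n=1}^{N}|f_{0_n,j}||g_{0,0_n}|e^{j\beta_{0_n}}$.

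Next I would expand the squared modulus as a double sum,
\begin{align}
\Bigl|\sum_{n=1}^{N}|f_{0_n,j}||g_{0,0_n}|e^{j\beta_{0_n}}\Bigr|^2
&=\sum_{n=1}^{N}\sum_{k=1}^{N}|f_{0_n,j}||g_{0,0_n}||f_{0_k,j}||g_{0,0_k}|\,e^{j(\beta_{0_n}-\beta_{0_k})},
\end{align}
and, since the left-hand side is real, the exponentials combine pairwise into cosines $\cos(\beta_{0_n}-\beta_{0_k})$. Relabelling the ordered pair $(n,k)$ by a single index $q=1,\dots,N^2$, each summand is the product of the deterministic weight $a_q=\cos(\beta_{0_n}-\beta_{0_k})$ and the nonnegative magnitude product $|f_{0_n,j}||g_{0,0_n}||f_{0_k,j}||g_{0,0_k}|$. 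Because $a_q$ may be negative while the magnitude product is nonnegative, I would carry the sign separately and write the weight as $|a_q|$; the diagonal terms $n=k$ give $a_q=1$ and reduce to a product of two exponential (squared-Rayleigh) variables.

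The core step is to model the magnitude product attached to each $q$ by a single generalized gamma random variable $X_{GG_q}(\zeta^2,0.5\kappa,0.5)$. Using independence of the per-element Rayleigh channels and the moment formula $\mathbb{E}[|g|^m]=(2\sigma^2)^{m/2}\Gamma(1+m/2)$, I would compute the first and second moments of the product as closed-form products of Gamma functions. Matching these to the generalized-gamma moments $\mathbb{E}[X_{GG}^m]=\zeta^{2m}\,\Gamma(\kappa+2m)/\Gamma(\kappa)$ (which follow from the $GG(a,d,p)$ moment expression $a^m\Gamma((d+m)/p)/\Gamma(d/p)$ with $a=\zeta^2$, $d=0.5\kappa$, $p=0.5$) gives two equations, namely $\zeta^2\kappa(\kappa+1)$ equal to the first moment and $\zeta^4\kappa(\kappa+1)(\kappa+2)(\kappa+3)$ equal to the second moment, which determine $\zeta$ and $\kappa$. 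Substituting this fitted law for every $q$ produces exactly \eqref{eq:DesiredSingProp}.

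I expect the main obstacle to be justifying that a single generalized-gamma family (with one pair $(\zeta,\kappa)$) adequately represents the heterogeneous summands: the off-diagonal terms are products of four Rayleigh magnitudes whereas the diagonal terms are products of only two, so a uniform moment-matched fit is an approximation rather than an identity. Controlling this—together with the bookkeeping of the sign of $a_q$ and the correlation shared through the common factors $g_{0,0_n},f_{0_n,j}$ across different pairs $q$—is the delicate part; the algebraic reduction to the double sum is routine, but it is the validity and accuracy of the generalized-gamma surrogate that carries the lemma.
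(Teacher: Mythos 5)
Your proposal is correct in outline and lands on the same algebraic skeleton as the paper --- factor out $P\,r_{0,0}^{-\alpha}t_{0,j}^{-\alpha}$ via the footnote's distance approximation, expand the squared modulus into an $N\times N$ double sum with weights $\cos(\beta_{0_n}-\beta_{0_k})$, and re-index over $q=1,\dots,N^2$ --- but your distributional step is genuinely different from the paper's. The paper approximates \emph{inside} the modulus, before squaring: each per-element double-Rayleigh product $|g_{0,0_n}||f_{0_n,j}|$ is replaced by a Gamma variate $X_{G_n}(\kappa,\zeta)$ with fixed literature values $\kappa=1.6467$, $\zeta=0.9539$; the expansion then yields products $X_{G_n}X_{G_k}$, each of which is modeled as $X_{GG}(\zeta^2,0.5\kappa,0.5)$ using the fact that the \emph{square} of a Gamma$(\kappa,\zeta)$ variate is exactly generalized gamma with those parameters. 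You instead expand first and moment-match the four-fold Rayleigh magnitude product attached to each $q$ directly to a GG law. The two routes buy different things: the paper's ordering makes the diagonal terms ($n=k$) exact given the Gamma fit and inherits ready-made, citable parameter values, with its error concentrated in treating the independent cross products $X_{G_n}X_{G_k}$, $n\neq k$, as if they were squares; your route treats diagonal and cross terms on the same footing but, as you yourself note, a single pair $(\zeta,\kappa)$ cannot match both types of summand, and your fitted parameters would in general differ from the lemma's intended constants, which are tied to the per-element Gamma fit rather than to the four-fold product. Two cautions. First, ``carrying the sign separately and writing $|a_q|$'' is not mere bookkeeping: if $a_q<0$ then $a_qX\neq|a_q|X$; the paper legitimizes the absolute value by asserting $-\pi/2\le\beta_{0_n}-\beta_{0_k}\le\pi/2$, which forces all cosines to be nonnegative, and your argument needs the same (or an equivalent) assumption to match \eqref{eq:DesiredSingProp}. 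Second, your observation that the summands are dependent through shared factors is correct, but this is a gap shared with the paper --- that dependence is silently ignored when the LT of $S_{R_0}$ in Lemma~\ref{lemma_desired_ID} is factorized as a product over $q$ --- so it weakens both derivations equally rather than being a defect specific to yours.
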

\begin{proof}
The desired signal power through nearest IRS $S_{R_0}(r_{0,0})$ in \eqref{eq:desiredInDirect1} is simplified using the following steps:
  \begin{align}
\label{eq:DesiredSingProof}
\begin{split}
    S_{R_0}(r_{0,0})=&P {r^{-{\alpha}}_{0,0}}{t^{-{\alpha}}_{0,j}}\vert\sum_{n=1}^{N}\vert  f_{0_n,j}\vert \vert g_{0,0_n}\vert{e^{-j\beta_{0_n}} }\vert^2\stackrel{(a)}{\approx}P{r^{-{\alpha}}_{0,0}}{t^{-{\alpha}}_{0,j}}\vert\sum_{n=1}^{N}  X_{G_n}(\kappa,\zeta) {e^{-j\beta_{0_n}} }\vert^2\\\stackrel{(b)}{=}&P {r^{-{\alpha}}_{0,0}}{t^{-{\alpha}}_{0,j}}\sum_{n=1}^{N}\sum_{k=1}^{N}  {\cos({\beta_{0_n}-\beta_{0_k}}) }X_{G_n}(\kappa,\zeta)X_{G_k}(\kappa,\zeta)\\\stackrel{(c)}{=}&P {r^{-{\alpha}}_{0,0}}{t^{-{\alpha}}_{0,j}} \sum_{n=1}^{N}\sum_{k=1}^{N}  {\cos({\beta_{0_n}-\beta_{0_k}}) } X_{GG_{k,n}} (\zeta^2, 0.5 \kappa, 0.5) \\\stackrel{(d)}{=}&P {r^{-{\alpha}}_{0,0}}{t^{-{\alpha}}_{0,j}} \sum_{q=1}^{N^2} { \vert a_q \vert} X_{GG_{q}} (\zeta^2, 0.5 \kappa, 0.5),
\end{split}
    \end{align}
   where $\;C_{0_n,j}\approx\;C_{0,j}$ (a) is followed by noting that $\vert{ g}_{0,{0_n}}\vert\vert{ f}_{{0_n,j}}\vert$ is the product of two independent Rayleigh distributed random variables with mean and variance $\mu_x=\sigma \pi/2$ and $\sigma^2_x=2^2\sigma^2(1-\pi^2/16)$, respectively. However, the exact distribution of the product of two i.i.d Rayleigh random variables in \cite{salo2006distribution} is complicated. Therefore, to maintain tractability, we approximate it as  a gamma random variable $X_{G_n}(\kappa,\zeta)$ with the shape and scale parameter $\kappa= m= 1.6467$ and $\zeta=\frac{\Omega}{m} = 0.9539 $, respectively \cite{lu2011accurate}. 
    Note that (b) follows from the simplification of (a)  using $\vert x\vert^2=\rm Re(x)^2+\rm Im(x)^2$ and  trigonometric identity $\cos(\alpha-\beta)=\cos(\alpha)\cos(\beta)-\sin(\alpha)\sin(\beta)$. Next, (c) follows from the fact that the product of two i.i.d gamma random variables is equivalent to generalized gamma random variable  $X_{GG_q}(a,d,p)$, where $a=\zeta^2,d=\frac{\kappa}{2},$ and $p=\frac{1}{2}$ represent the scale, shape, and power, respectively \cite{stacy1962generalization}. Finally, in (d), the double summation $n=1,\cdots,N$, $k=1,\cdots,N$ and $a_q=\cos(\beta_{0_n}-\beta_{0_k})$ is transformed to single summation  $q=1,\cdots,n+k,\cdots,N^2$, where  $-\pi/2 \leq \beta_{0_n}-\beta_{0_k} \leq \pi/2$.
\end{proof}
In what follows, we derive the conditional LT of the received signal power in the IRS-assisted communication mode.
	\begin{lem}
	\label{lemma_desired_ID}
	
   Conditioned on $r_{0,0}$, the LT of the   $S_{R_0}$ experienced by the typical user through nearest IRS  $\mathcal{L}_{S_{R_0}}$ in the IRS-assisted indirect communication mode  is given as follows:
	  \begin{align}
	  \label{lapsro}
        \begin{split}
                \mathcal{L}_{ S_{R_0}|r_{0,0}}(s)=&\mathbb{E}\left[ \prod_{q=1}^{N^2}             \frac{1}{\left( 2\zeta^2s \;\hat{a}_q  \right)^d } \exp\left(\frac{1}{8 \;\zeta^2 s \;\hat{a}_q   }\right) D_{\rm-2d}\left(\frac{1}{2\zeta^2s \;\hat{a}_q   }\right)\right],
                   \end{split}
    \end{align}
    where $\hat{a}_q=P {r^{-{\alpha}}_{0,0}}{t^{-{\alpha}}_{0,j}}  a_q $, $a_q$ is defined in { Lemma~\ref{propID}} and $D_{-v}(\cdot)$ is the parabolic cylinder function.
	\end{lem}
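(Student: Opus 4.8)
The plan is to evaluate the conditional Laplace transform straight from its definition, $\mathcal{L}_{S_{R_0}\mid r_{0,0}}(s)=\mathbb{E}\!\left[e^{-sS_{R_0}}\right]$, feeding in the representation proved in Lemma~\ref{propID}. Writing $S_{R_0}=\sum_{q=1}^{N^2}\hat a_q\,X_{GG_q}(\zeta^2,0.5\kappa,0.5)$ with $\hat a_q=P r_{0,0}^{-\alpha}t_{0,j}^{-\alpha}a_q$, I would split the randomness into the phase-dependent weights $a_q=\cos(\beta_{0_n}-\beta_{0_k})$ and the generalized gamma variables $X_{GG_q}$ that carry the fading magnitudes; because Rayleigh magnitudes and phases are independent, I can condition on the phases (freezing every $\hat a_q$) and integrate out the magnitudes first, leaving an outer expectation over the phases that is exactly the $\mathbb{E}[\cdot]$ in \eqref{lapsro}. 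Conditioned on the phases, the transform of the sum becomes a product over $q$, which reduces the task to the Laplace transform of one scaled generalized gamma variable. The restriction $-\pi/2\le\beta_{0_n}-\beta_{0_k}\le\pi/2$ from Lemma~\ref{propID} keeps each $\hat a_q\ge 0$, so every factor is a genuine Laplace transform.

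For a single factor I would substitute the Stacy density of $X_{GG_q}(\zeta^2,0.5\kappa,0.5)$---scale $a=\zeta^2$, shape $d=\kappa/2$, power $p=1/2$, i.e.\ $f(x)=\frac{1}{2\zeta^{\kappa}\Gamma(\kappa)}x^{d-1}e^{-\sqrt{x}/\zeta}$---into $\mathbb{E}[e^{-s\hat a_q X_{GG_q}}]=\int_0^\infty e^{-s\hat a_q x}f(x)\,dx$. The change of variable $u=\sqrt{x}$ converts the awkward $p=\tfrac12$ kernel into a Gaussian one and collapses the integral to $\int_0^\infty u^{2d-1}e^{-\beta u^2-\gamma u}\,du$ with $\beta=s\hat a_q$ and $\gamma=1/\zeta$. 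I would then invoke the tabulated identity
\[
\int_0^\infty x^{\nu-1}e^{-\beta x^2-\gamma x}\,dx=(2\beta)^{-\nu/2}\Gamma(\nu)\exp\!\left(\frac{\gamma^2}{8\beta}\right)D_{-\nu}\!\left(\frac{\gamma}{\sqrt{2\beta}}\right),
\]
with $\nu=\kappa=2d$. After the $\Gamma(\kappa)$ factors cancel, this produces precisely the prefactor $(2\zeta^2 s\hat a_q)^{-d}$, the exponential $\exp\!\big(1/(8\zeta^2 s\hat a_q)\big)$, and the parabolic cylinder term $D_{-2d}$ of \eqref{lapsro}.

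Reassembling the product over $q=1,\dots,N^2$ and reinstating the expectation over the phases then gives the claimed formula. The main obstacle is the factorization of the conditional transform across $q$: the variables $X_{GG_q}=X_{G_n}X_{G_k}$ are built from the \emph{shared} gamma factors $X_{G_1},\dots,X_{G_N}$ and are therefore statistically dependent, so the product form holds only under the working approximation that the $N^2$ generalized gamma terms are independent once the phases are fixed---this is where tractability is traded for exactness and where I would concentrate the justification (or flag it explicitly as an approximation, in the same spirit as step (a) of Lemma~\ref{propID}). The remaining technical care is purely in the $p=\tfrac12$ integral: matching $\nu,\beta,\gamma$ to the identity above is what fixes the order and argument of $D_{-2d}$, and a misidentification there is the easiest way to get the final arguments wrong.
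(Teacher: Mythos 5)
Your proposal follows essentially the same route as the paper's proof: the same decomposition of $S_{R_0}$ from Lemma~1, factorization of the conditional transform into a product over $q$, substitution of the generalized gamma density, the change of variable $u=\sqrt{x}$, and the Gradshteyn--Ryzhik identity (Eq.~3.462) yielding the parabolic cylinder function, with the outer expectation absorbing the phase-dependent weights $\hat{a}_q$. The one place you go beyond the paper is in explicitly flagging that the $X_{GG_q}=X_{G_n}X_{G_k}$ share gamma factors and are therefore dependent, so the product form is an approximation---the paper silently asserts independence at this step, so your caveat is a correct and worthwhile refinement rather than a deviation.
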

	\begin{proof}
	  The LT of $S_{R_0}$ is given by using step (d) of \eqref{eq:DesiredSingProof} as follows:
    \begin{align}
    \label{eq:nnn}
        \begin{split}
                \mathcal{L}_{S_{R_0}|r_{0,0}}(s)&=\mathbb{E}[e^{-s S_{R_0}}]
                \\&= \mathbb{E}[e^{-sP {r^{-{\alpha}}_{0,0}}{t^{-{\alpha}}_{0,j}} \sum_{q=1}^{N^2}  a_q X_{GG_q}(\zeta^2, 0.5 \kappa, 0.5) }]
                \\&=\prod_{q=1}^{N^2} \mathbb{E}[e^{-sP {r^{-{\alpha}}_{0,0}}{t^{-{\alpha}}_{0,j}}  a_q X_{GG_q}(\zeta^2, 0.5 \kappa, 0.5) }],
                   \end{split}
    \end{align}
    where \eqref{eq:nnn} follows from the fact that MGF of the linear combination of independent variables can be rewritten as the product of MGFs of each of the independent variables, and
      \begin{align}
      \label{eq:mmm}
        \begin{split}
               & \mathbb{E}[e^{-sP {r^{-{\alpha}}_{0,0}}{t^{-{\alpha}}_{0,j}}  a_q X_{GG_q}(\zeta^2, 0.5 \kappa, 0.5) }]=\int_{0}^{\infty}e^ {-s\hat{a}_q  X_{GG_q}} f_{X_{GG}}(x)dx\\&\stackrel{(a)}{=} \int_{0}^{\infty} \frac{0.5}{\zeta^{ \kappa} \Gamma( \kappa)}x^{0.5 \kappa-1} e^{-s\hat{a}_q    x-\sqrt{\frac{x}{\zeta^2}}}dx
               \\&\stackrel{(b)}{=}\frac{2}{\zeta^{ \kappa} \Gamma( \kappa)} \int_{0}^{\infty} g^{ \kappa-1} e^{-s\hat{a}_q   g^2-{\frac{g}{{\zeta}}}}dg
               \\&\stackrel{(c)}{=}
               \frac{1}{\left( 2\zeta^2s \;\hat{a}_q  \right)^{0.5 \kappa} } \exp\left(\frac{1}{8 \;\zeta^2 s \;\hat{a}_q   }\right) D_{\rm- \kappa}\left(\frac{1}{2\zeta^2s \;\hat{a}_q   }\right),
        \end{split}
    \end{align}
    where (a) is obtained by substituting the probability density function of GG random variable 
    $f_{XX_G}(x)= \frac{0.5}{\zeta^{\kappa} \Gamma(\kappa)}x^{0.5\kappa-1} e^{-\sqrt{\frac{x}{\zeta^2}}}$ \cite{lu2011accurate}, (b) is   obtained by changing variable $g=\sqrt{x}$, (c) is derived by using the identity $\int_{0}^{\infty} g^{\nu-1} e^{-\beta  g^2-\gamma g} dg=(2\beta)^{-\frac{\nu}{2}} \Gamma [\nu] \exp(\frac{\eta^2}{8 \beta}) D_{-\nu}\left(\frac{\eta}{\sqrt{2 \beta}}\right)$ from  Eq. 3.462 of \cite{jeffrey2007table}, where
    $D_{-\nu}\left(.\right)$ represents the  parabolic cylinder function. Finally, by using   $\hat{a}_q=P {r^{-{\alpha}}_{0,0}}{t^{-{\alpha}}_{0,j}}  a_q $ in  step (c) of \eqref{eq:mmm} and \eqref{eq:nnn} results in {\bf Lemma~}\ref{lemma_desired_ID}.
    	\end{proof}
 	\begin{figure}[t]
	\begin{center}
	    \includegraphics[scale=0.74]{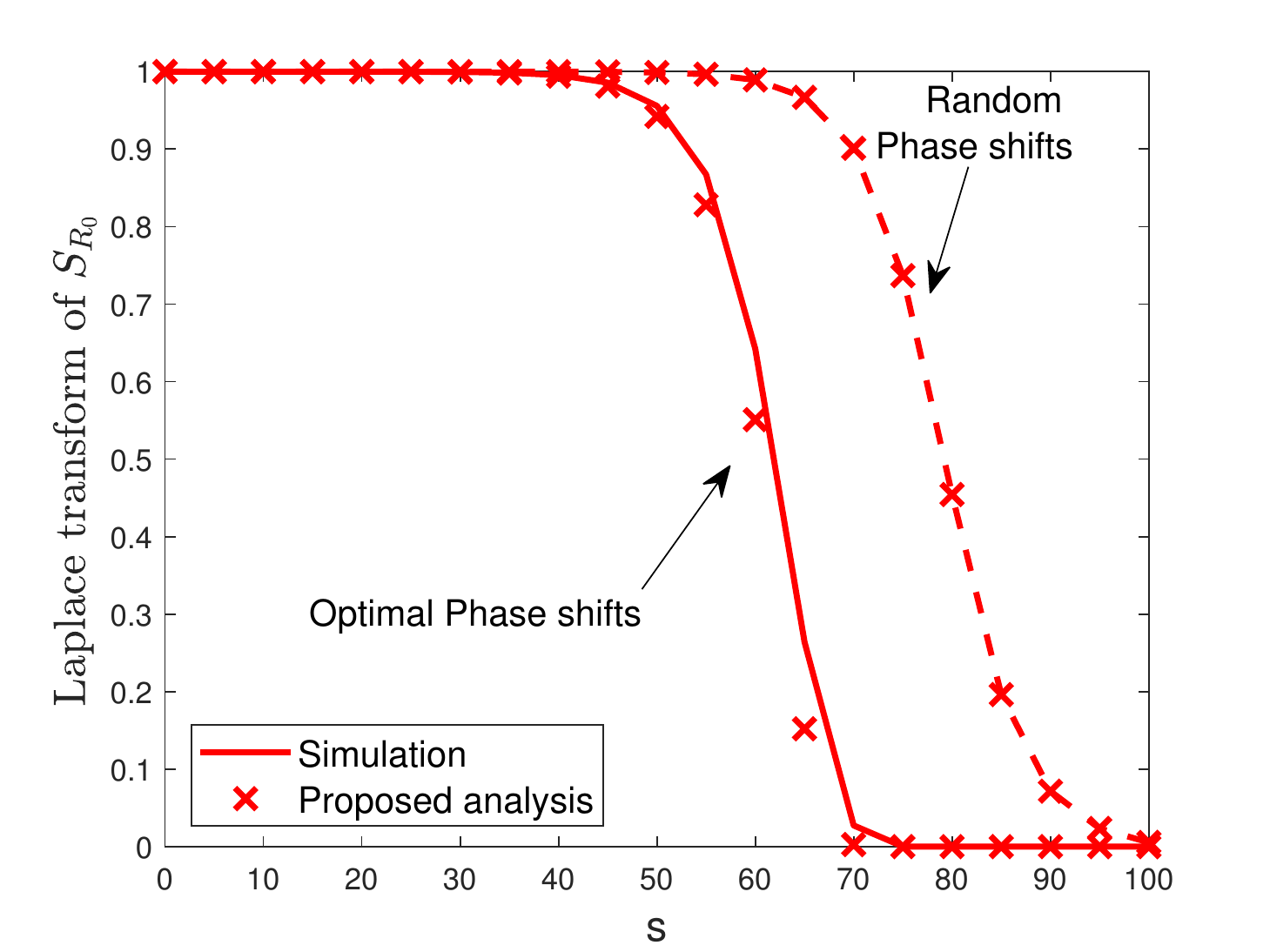}
			\caption{Validation of conditional LT in \eqref{eq:DesiredSingProp} of the desired received signal power  $S_{R_0}(r_0)$ considering  (i) random IRS phase shifts and (ii) optimal IRS phase shifts obtained from CVX,   using Monte-Carlo simulations.}
	\end{center}			\label{fig:Figure_MGF_SR0}	
	\end{figure}
Fig.~2 validates the accuracy of the LT of the received signal power (derived in {\bf Lemma~2}) of the typical  IRS-assisted user with  the
Monte-Carlo simulations. Our derived expressions match well with the simulations confirming the accuracy of our $S_{R_0}$ model and its corresponding LT. In both Lemma~2 and simulations, the phase-shifts are obtained optimally from CVX. Specifically, we solve the following problem ({\bf P1}) to maximize the received signal power (given in \eqref{eq:desiredInDirect1}) and obtain the optimal phase-shifts:
	\begin{align}
	\begin{split}
	{\bf P1}:	&\max_{\theta_{0_n}, \forall {n}}\; 	 {\rm S_{R_0}}= P \vert\sum_{n=1}^{N} C_{0_n,j} f_{0_n,j}  g_{0,0_n} {e^{j\theta_{0_n}} }\vert^2 \\
	{\rm s.t.}\; & 0\le \theta_{0_n}\le \pi, \forall {n=1,\cdots,N.}, \;\;\;\; 
	\end{split}
	\label{eq:OPtProb1Var1in_d}
	\end{align}

	Substituting $f_{0_n,j}=\vert{ f}_{{0_n,j}}\vert e^{-j\psi_{0_n,j}}$, $  g_{0,0_n}=\vert{ g}_{0,{0_n}}\vert e^{-j\phi_{0,0_n}}$ , $\Theta_{0}={\rm{diag}}\{e^{j\theta_{0_1}},e^{j\theta_{0_2}},\cdots,e^{j\theta_{0_N}}\}$  and $\;C_{0_n,j}\approx\;C_{0,j}={r^{-{\alpha}}_{0,0}}{t^{-{\alpha}}_{0,j}}$ defined in Sec.~IIB, the objective function  can be rewritten as $P {r^{-{\alpha}}_{0,0}}{t^{-{\alpha}}_{0,j}}\vert\sum_{n=1}^{N}\vert  f_{0_n,j}\vert \vert g_{0,0_n}\vert{e^{-j\beta_{0_n}} }\vert^2$. Since  $P {r^{-{\alpha}}_{0,0}}{t^{-{\alpha}}_{0,j}}$ is independent of the optimization variable, we can discard this term. Now, we transform the objective to equivalent matrix form as $
\vert\mathbf{ \tilde g}^{H}_{0,0} \; {\bf B_{0}} \;\mathbf{\tilde f}_{0,{j}} \vert^2$, where  $\mathbf{ \tilde g}_{0,0} \in \mathbb{R}^{1\times N},$ $\mathbf{ \tilde f}_{0,j} \in \mathbb{R}^{N\times 1},$  and ${\bf B}_0={\rm{diag}}\{e^{j\beta_{0_1}},e^{j\beta_{0_2}},\cdots,e^{j\beta_{0_N}}\}$. Since the  objective function is a scalar,  we can convert absolute square to norm square as $
\Vert\mathbf{ \tilde g}^{H}_{0,0}\;  {\bf B_{0}} \;\mathbf{\tilde f}_{0,{j}} \Vert^2$.  Finally, defining $\mathbf{v}=[v_1,\cdots,v_n]^H$, where $v_n=e^{j\beta_{0_n}}, \forall n$, and  $\Phi=\rm{diag}({\bf \tilde g}^{H}_{0,0}) \mathbf{\tilde f}_{0,{j}}$, we reformulate
$\Vert\mathbf{ \tilde g}^{H}_{0,0}  {B_{0}} \;\mathbf{\tilde f}_{0,{j}} \Vert^2= \Vert \mathbf{v}^H \Phi\Vert^2$. The problem \textbf{P1} can thus  be reformulated as follows:
			\begin{align}
	\begin{split}
	{\bf P2}:	&\max_{\mathbf{v}}\; 	\mathbf{v}^H \Phi \Phi^H \mathbf{v} \\
	{\rm s.t.}\; & \vert v_n\vert^2=1, \forall {n=1,\cdots,N}. \;\;\;\; 
	\end{split}
	\label{eq:OPt2}
	\end{align}  
	\textbf{P2} is non-convex  quadratically constrained quadratic program (QCQP) in the homogeneous form and the constraint is rank one \cite{boyd2004convex}. Now, defining $\mathbf{V}=\mathbf{v}\mathbf{v}^H $, we apply semi-definite relaxation (SDR) to relax the constraint as follows:
	\begin{align}
	\begin{split}
	{\bf P3}:	&\max_{\mathbf{V}}\; 	\mathrm{Tr}\;(\Phi \Phi^H\mathbf{V}  )\\
	{\rm s.t.}\; & \mathbf{V}_{n,n}=1, \forall {n=1,\cdots,N}, \;\;\;\;  \mathbf{V}\ge 0.
	\end{split}
	\label{eq:OPt3}
	\end{align} 
	Since the problem is now transformed in to a convex semidefinite program (SDP), similar to \cite{wu2019intelligent}, we  solve it for the optimal value using CVX.

Furthermore, Fig.~2 also compares the LT of $S_{R_0}$ with optimal IRS phase-shifts to  LT of $S_{R_0}$ with random IRS phase shifts. For a given value of $s$, the LT  of $S_{R_0}$ with optimal IRS phase-shifts  is lower than the LT  of $S_{R_0}$ with random IRS phase-shifts.
Thus, it is evident that the received signal power $S_{R_0}$ with optimal phase-shifts significantly outperforms the received signal power $S_{R_0}$ with  random phase shifts. 

As a special case of {\bf Lemma~2} for optimal phase-shifts, the  statistics of the received signal power can be modeled as follows. 
\begin{corollary}	
\label{cor:meanSR0}
{The optimal received signal power  can be obtained if we substitute  $\beta_{0_n,j}=\theta_{0_n}-\psi_{0_n,j}-\phi_{0,0_n}=0 $ in \eqref{eq:DesiredSingProp}, which maximizes   $a_q$ to  unity $\forall n\in\{1,\cdots,N\}$  \cite{basar2019wireless} and results in maximum $S_{R_0}$ as $S_{R_0}=P {r^{-{\alpha}}_{0,0}}{t^{-{\alpha}}_{0,j}} W$. In this case, $W= \sum_{q=1}^{N^2} X_{GG_{q}} (\zeta^2, 0.5 \kappa, 0.5) $ can be modeled as a normal random variable  Since the square of the number of IRS elements can be a large number,  using CLT with mean $
\mu_w= N^2 \mu_{GG}$ and variance $
\sigma^2_w= N^2 \sigma^2_{GG}$. 
where $\mu_{GG}=
\zeta^2{\frac  {\Gamma ( \kappa+2)}{\Gamma (\kappa)}}$
and 
$ \sigma_{GG}= \zeta^{4}\left({\frac  {\Gamma ( \kappa+4)}{\Gamma (\kappa)}-\mu^2_{GG} }\right)$ \cite{stacy1962generalization}, we have the mean and the variance of $S_{R_0}$ as  $\mathbb{E}[S_{R_0}]=P {r^{-{\alpha}}_{0,0}}{t^{-{\alpha}}_{0,j}}\mu_{GG}$ and  $\mathbb{V}[S_{R_0}]=P^2 {r^{-{2\alpha}}_{0,0}}{t^{-{2\alpha}}_{0,j}} \sigma_{GG}$}, respectively.
\end{corollary}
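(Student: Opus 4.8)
The plan is to specialize Lemma~\ref{propID} to the coherent (optimal) phase configuration and then transfer the first two moments of the resulting scalar sum to $S_{R_0}$. First I would start from the representation in step (d) of \eqref{eq:DesiredSingProof}, namely $S_{R_0}=P\,r_{0,0}^{-\alpha}t_{0,j}^{-\alpha}\sum_{q=1}^{N^2}|a_q|\,X_{GG_q}(\zeta^2,0.5\kappa,0.5)$, and observe that each weight $a_q=\cos(\beta_{0_n}-\beta_{0_k})$ is bounded by $1$ while every $X_{GG_q}\ge 0$. Hence the sum is maximized term-by-term by forcing $\beta_{0_n}-\beta_{0_k}=0$ for all pairs, which is achieved simultaneously by the per-element choice $\theta_{0_n}=\psi_{0_n,j}+\phi_{0,0_n}$ so that $\beta_{0_n}=0$ for every $n$ (coherent combining). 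Substituting $a_q\equiv 1$ then gives $S_{R_0}=P\,r_{0,0}^{-\alpha}t_{0,j}^{-\alpha}\,W$ with $W=\sum_{q=1}^{N^2}X_{GG_q}(\zeta^2,0.5\kappa,0.5)$, which fixes the deterministic scaling factor.

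Next I would compute the per-term moments from the GG density $f_{X_{GG}}(x)=\frac{0.5}{\zeta^{\kappa}\Gamma(\kappa)}x^{0.5\kappa-1}e^{-\sqrt{x/\zeta^2}}$ used in Lemma~\ref{propID}. The substitution $u=\sqrt{x}$ converts $\int_0^\infty x^m f_{X_{GG}}(x)\,dx$ into the standard gamma integral $\frac{1}{\zeta^{\kappa}\Gamma(\kappa)}\int_0^\infty u^{\kappa+2m-1}e^{-u/\zeta}\,du=\zeta^{2m}\frac{\Gamma(\kappa+2m)}{\Gamma(\kappa)}$, giving $\mu_{GG}=\zeta^2\frac{\Gamma(\kappa+2)}{\Gamma(\kappa)}$ for $m=1$ and $\mathbb{E}[X_{GG}^2]=\zeta^4\frac{\Gamma(\kappa+4)}{\Gamma(\kappa)}$ for $m=2$, so that the per-term variance reads $\sigma_{GG}=\zeta^4\frac{\Gamma(\kappa+4)}{\Gamma(\kappa)}-\mu_{GG}^2$, matching the stated quantities. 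I would then model $W$ as Gaussian by a central-limit argument: as a sum of $N^2$ generalized-gamma summands with $N^2$ large, its law is approximated by a normal with mean $\mu_w=N^2\mu_{GG}$ and variance $\sigma_w^2=N^2\sigma_{GG}$. Finally, since $S_{R_0}$ is a deterministic distance-dependent scaling of $W$, linearity of expectation and the quadratic scaling of variance yield $\mathbb{E}[S_{R_0}]=P\,r_{0,0}^{-\alpha}t_{0,j}^{-\alpha}\,\mu_w$ and $\mathbb{V}[S_{R_0}]=P^2\,r_{0,0}^{-2\alpha}t_{0,j}^{-2\alpha}\,\sigma_w^2$, completing the corollary (with the caveat that the $N^2$ prefactor should accompany $\mu_{GG}$ and $\sigma_{GG}$ in the final displayed expressions).

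The hard part — and the step I would scrutinize most — is the Gaussian approximation of $W$, because the $N^2$ summands are \emph{not} mutually independent. Writing $q\leftrightarrow(n,k)$, the terms $X_{GG_q}=X_{G_n}X_{G_k}$ share the common gamma factors $X_{G_n}$, and indeed under coherent combining the sum collapses to $W=\big(\sum_{n=1}^N X_{G_n}\big)^2$, the square of a sum of $N$ i.i.d.\ gamma variables rather than a sum of $N^2$ independent ones. A fully rigorous treatment would therefore replace the plain i.i.d.\ CLT either by a CLT for the weakly dependent array or, more transparently, by a delta-method linearization of the square of the asymptotically normal inner sum $\sum_n X_{G_n}$; correspondingly, the variance $\sigma_w^2=N^2\sigma_{GG}$ must be read as the value obtained after dropping the inter-term covariances $\mathrm{Cov}(X_{GG_q},X_{GG_{q'}})$. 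I would flag this independence assumption explicitly, since it is the only non-mechanical ingredient: the mean holds exactly by linearity regardless of the dependence, whereas both the normality claim and the clean variance formula rely on neglecting the correlation structure.
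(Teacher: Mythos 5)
Your proposal follows essentially the same route as the paper's own (inline) justification of the corollary: force $\beta_{0_n}=0$ by the phase choice $\theta_{0_n}=\psi_{0_n,j}+\phi_{0,0_n}$ so that every $a_q=1$, pull out the deterministic factor $P r_{0,0}^{-\alpha}t_{0,j}^{-\alpha}$, invoke a CLT on the $N^2$ generalized-gamma summands, and transfer the moments; your explicit computation of $\mu_{GG}$ and $\sigma_{GG}$ via the substitution $u=\sqrt{x}$ is correct and matches the cited values.

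What you add beyond the paper is worth keeping. First, your two editorial caveats are right: the final displayed moments should carry the $N^2$ prefactor (i.e., $\mathbb{E}[S_{R_0}]=P r_{0,0}^{-\alpha}t_{0,j}^{-\alpha}N^2\mu_{GG}$, and similarly for the variance), and the per-term variance should read $\sigma_{GG}=\zeta^{4}\frac{\Gamma(\kappa+4)}{\Gamma(\kappa)}-\mu_{GG}^{2}$, with $\mu_{GG}^2$ outside the $\zeta^4$ bracket. Second, and more substantively, your scrutiny of the CLT step exposes a genuine weakness that the paper passes over silently: the $N^2$ terms $X_{GG_{q}}=X_{G_n}X_{G_k}$ are not independent, since they share the $N$ underlying gamma factors, and under coherent combining $W$ collapses exactly to $\bigl(\sum_{n=1}^{N}X_{G_n}\bigr)^{2}$. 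This is not a cosmetic issue: writing $T=\sum_n X_{G_n}$ with $\mathbb{E}[T]=N\mu_G$ and $\mathbb{V}[T]=N\sigma_G^2$, the delta method gives $\mathbb{V}[W]\approx 4\mathbb{E}[T]^2\,\mathbb{V}[T]=4N^{3}\mu_G^{2}\sigma_G^{2}$, which grows like $N^{3}$, whereas the independence-based formula $N^{2}\sigma_{GG}$ grows only like $N^{2}$; the neglected covariances dominate. The mean is unaffected (linearity holds regardless of dependence, up to the $O(N)$ diagonal correction from the terms with $n=k$), so the paper's mean is asymptotically right, but its variance under optimal phase shifts should be read as a heuristic that understates the true spread by a factor of order $N$. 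Your suggestion to replace the i.i.d.\ CLT by asymptotic normality of the inner sum $T$ followed by a delta-method linearization of $T^2$ is the correct way to make this corollary rigorous.
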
	
    	



\section{Statistics  of the Aggregate Interference (IRS-assisted Transmission)}
In this section, we first derive the LT of the aggregate interference observed at a typical IRS-assisted user from all BSs. Then, we model the worst-case aggregate interference observed at a typical IRS-assisted user from all IRSs and derive its corresponding LT.

The LT of the aggregate interference observed at a typical IRS-assisted user from all BSs (excluding the blocked nearest direct BS) $\mathcal{L}_{I_B}(s)$ is derived as follows: 

\begin{align}
\label{eq:IBDirec}
\begin{split}
\mathcal{L}_{ I_B|d_{0}}(s) =&\mathbb{E}[ e^{-s\sum_{j\in \Phi_{B} \backslash 0} P \beta^2\vert h_{j}\vert ^2 (\ell_j^2+H_B^2)^{-\alpha/2}}]
\\ \stackrel{(a)}{=} & \: \mathbb{E}_{\Phi_{\rm B}}\left[\prod_{j\in \Phi_{B} \backslash 0} \frac{1}{1+s \; P \beta^2 (\ell_j^2+H_B^2)^{-\alpha/2} }\right]
    \\\stackrel{(b)}{=} & \: \mathrm{exp}\left({-2 \pi \lambda_B \int_{\ell_0}^{\infty} \left( 1-
\frac{1}{1+ P \beta^2 (\ell_j^2+H_B^2)^{-\alpha/2}}\right)\ell_j d\ell_j}\right)
  \\\stackrel{(c)}{=} &\: \mathrm{exp}\left(-2 \pi \lambda_B \int_{d_{0}}^{\infty} \left( 1-
\frac{1}{1+  P \beta^2  d^{-\alpha}_{j} }\right) d_j \;dd_j \right),
\end{split}
\end{align}
where (a) is obtained by applying the LT of $\vert h_{j}\vert^2$ and $\vert h_{j}\vert^2\sim\exp(1)$, and (b) is derived using PGFL w.r.t the two-dimensional distance $\ell_j$ of the interfering BSs \cite{lyu2020hybrid}, and (c) is obtained by substituting  $d_j=\sqrt{\ell_j^2+H_B^2}$. The closed-form expression can then be  obtained as follows:
 \begin{align}
\label{eq:LTDirectforID}
\begin{split}
\mathcal{L}_{ I_B|d_0}(s )= \mathrm{exp}\left(- 2 \pi \lambda_B \frac{ d_{0}^{2-\alpha} s P \beta^2  }{\alpha-2} \;_2 F_1\left(1,\frac{-2+\alpha}{\alpha};2-\frac{2}{\alpha}; -s\; P \beta^2 d_{0}^{-\alpha} \right) \right).
\end{split}
\end{align}

\begin{corollary} For $\alpha=4$. the LT of the aggregate interference to the typical user through all the BSs (except the associated BS$_0$ ) $ \mathcal{L}_{\hat{I}_B}$ in the the  direct  mode can simplified as:
$$\mathcal{L}_{\hat I_B|d_0}(s )= \mathrm{exp}\left( - \pi \lambda_B\sqrt{ s P \beta^2  } \;{ {\arctan}\left(\sqrt{ s\; \hat{P} \beta^2 d_{0}^{-4}} \right)}\right),
$$ using $ \;_2 F_1\left(1,0.5;1.5; -X^2 \right) =\frac{{ \arctan}X}{{X}}$, for $ \vert X\vert<1$ \cite{NIST:DLMF}[Eq.~15.4.3].

    	\end{corollary}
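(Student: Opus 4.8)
The plan is to obtain the $\alpha=4$ result purely by specializing the closed-form Laplace transform already derived in \eqref{eq:LTDirectforID} and then collapsing its Gauss hypergeometric factor through the stated identity. No new probabilistic argument is required: the interference LT has been reduced to a single closed form, so the corollary is a deterministic algebraic reduction of that expression (with the obvious substitution $P\to\hat P$ and the power-symbol convention of the direct mode).

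First I would set $\alpha=4$ in each $\alpha$-dependent piece of \eqref{eq:LTDirectforID}. The scalar prefactor $\tfrac{d_0^{2-\alpha}}{\alpha-2}$ becomes $\tfrac{d_0^{-2}}{2}$; the three hypergeometric parameters $\bigl(1,\tfrac{\alpha-2}{\alpha};2-\tfrac{2}{\alpha}\bigr)$ specialize to exactly $\bigl(1,\tfrac12;\tfrac32\bigr)$; and the argument $-sP\beta^2 d_0^{-\alpha}$ becomes $-sP\beta^2 d_0^{-4}$. The key observation is that this argument is a perfect negative square: writing $X=\sqrt{sP\beta^2}\,d_0^{-2}=\sqrt{sP\beta^2 d_0^{-4}}$ gives $-sP\beta^2 d_0^{-4}=-X^2$, so the hypergeometric factor is precisely $\,{}_2F_1\!\left(1,\tfrac12;\tfrac32;-X^2\right)$.

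Next I would invoke the identity $\,{}_2F_1\!\left(1,\tfrac12;\tfrac32;-X^2\right)=\tfrac{\arctan X}{X}$ from \cite{NIST:DLMF} and carry out the cancellation. Substituting into the exponent of \eqref{eq:LTDirectforID} yields
\[
-2\pi\lambda_B\,\frac{d_0^{-2}}{2}\,sP\beta^2\,\frac{\arctan X}{X}=-\pi\lambda_B\,\frac{sP\beta^2 d_0^{-2}}{\sqrt{sP\beta^2}\,d_0^{-2}}\,\arctan X=-\pi\lambda_B\sqrt{sP\beta^2}\,\arctan\!\left(\sqrt{sP\beta^2 d_0^{-4}}\right),
\]
where the $d_0^{-2}$ factors cancel and $sP\beta^2/\sqrt{sP\beta^2}=\sqrt{sP\beta^2}$. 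Exponentiating reproduces the claimed closed form, so the entire step is routine once the negative-square structure of the argument is spotted.

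The main obstacle is not the algebra but the domain of validity. The cited identity is stated for $|X|<1$, i.e.\ only when $sP\beta^2 d_0^{-4}<1$, whereas the Laplace transform must hold for all $s\ge 0$. I would resolve this by analytic continuation: both sides of \eqref{eq:LTDirectforID}, viewed as functions of $s$ (equivalently of $X$), are analytic on the right half-plane, and $\tfrac{\arctan X}{X}$ is exactly the analytic continuation of the Gauss series $\,{}_2F_1\!\left(1,\tfrac12;\tfrac32;-X^2\right)$ beyond its radius of convergence $|X|=1$. Since the two expressions agree on the disc $|X|<1$, they agree throughout the common domain, and the restriction $|X|<1$ appearing in the statement should be read only as the condition under which the power-series form of $\,{}_2F_1$ literally converges, not as a limitation on the validity of the closed form itself.
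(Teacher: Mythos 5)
Your proposal is correct and follows exactly the route the paper intends: the corollary is stated as a direct specialization of the closed form in \eqref{eq:LTDirectforID} at $\alpha=4$, where the parameters collapse to $\bigl(1,\tfrac12;\tfrac32\bigr)$ and the identity ${}_2F_1\bigl(1,\tfrac12;\tfrac32;-X^2\bigr)=\arctan(X)/X$ with $X=\sqrt{s\hat{P}\beta^2 d_0^{-4}}$ yields the arctangent form after the same cancellation you performed (the lone $P$ in the paper's displayed result is a typo for $\hat{P}$, as you implicitly noted). Your additional remark on extending the identity beyond $|X|<1$ by analytic continuation addresses a point the paper silently glosses over and is a correct, worthwhile refinement rather than a deviation.
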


\begin{lem}[Lower Bound on the Aggregate Interference from Multiple IRSs] We  reformulate the aggregate interference observed at a typical user from all IRSs (excluding nearest IRS) in a multi-IRS, multi-BS scenario as 
$
{ I}_{\rm R}{\leq} \sum_{j\in \Phi_{B }} P Z_j,
$
where
$ Z_j=\sum_{m=1}^{M-1} r_{{0,m}}^{-\alpha}\; t_{{m,j}}^{-\alpha}\;Y_m$.
\end{lem}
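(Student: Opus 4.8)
The plan is to obtain the claimed bound directly from the exact definition of the IRS interference in \eqref{eq:IRSINt2} through two controlled reductions: a geometric (distance) approximation that decouples the path loss from the cascaded fading, followed by a triangle-inequality bound that collapses the phase-dependent coherent sum into a single nonnegative scalar random variable per IRS. I would first relabel the $M-1$ interfering surfaces (all IRSs except the nearest IRS$_0$) as $m=1,\dots,M-1$ so that the outer double sum over $\{j,m\}$ in \eqref{eq:IRSINt2} matches the target form.

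First I would invoke the footnote approximations $r_{0,m_n}\approx r_{0,m}$ and $t_{m_n,j}\approx t_{m,j}$, justified because the inter-element spacing of an IRS is negligible relative to the IRS-to-user and IRS-to-BS distances. Under these, $C_{m_n,j}=(r_{0,m_n}t_{m_n,j})^{-\alpha/2}$ becomes independent of the element index $n$, so $C_{m_n,j}^2\approx r_{0,m}^{-\alpha}t_{m,j}^{-\alpha}$ factors out of the inner sum, giving
\begin{equation}
I_{\rm R}\approx \sum_{j\in\Phi_B}\sum_{m=1}^{M-1} P\, r_{0,m}^{-\alpha}\, t_{m,j}^{-\alpha}\left\vert\sum_{n=1}^{N} f_{m_n,j}\,g_{0,m_n}\,e^{j\theta_{m_n}}\right\vert^2 .
\end{equation}

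The central step is the bound on the squared magnitude of the coherent sum. Since $\vert e^{j\theta_{m_n}}\vert=1$ for every element phase shift, the triangle inequality yields $\left\vert\sum_{n=1}^{N} f_{m_n,j}g_{0,m_n}e^{j\theta_{m_n}}\right\vert\le \sum_{n=1}^{N}\vert f_{m_n,j}\vert\,\vert g_{0,m_n}\vert$, and squaring the nonnegative sides preserves the inequality. Defining $Y_m:=\bigl(\sum_{n=1}^{N}\vert f_{m_n,j}\vert\,\vert g_{0,m_n}\vert\bigr)^2$ and $Z_j:=\sum_{m=1}^{M-1} r_{0,m}^{-\alpha}t_{m,j}^{-\alpha}Y_m$ then gives $I_{\rm R}\le \sum_{j\in\Phi_B}P\,Z_j$, as claimed. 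Here $Y_m$ is the fully-constructive (aligned-phase) contribution of IRS $m$; because $\vert f_{m_n,j}\vert$ and $\vert g_{0,m_n}\vert$ are i.i.d.\ Rayleigh across $n$ and $j$, $Y_m$ is statistically identical for every interfering BS, which is why it carries no effective $j$-index. Note further that $\vert f_{m_n,j}\vert\vert g_{0,m_n}\vert$ is exactly the product of two i.i.d.\ Rayleigh variables approximated as the gamma variable $X_{G_n}(\kappa,\zeta)$ in Lemma~\ref{propID}, so $Y_m$ is the square of a sum of $N$ i.i.d.\ gamma variables, i.e.\ a common generalized-gamma law independent of $j$ — precisely the structure needed for the subsequent Laplace-transform derivation.

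The main obstacle is conceptual rather than computational. The triangle-inequality step discards all phase information and replaces the phase-dependent cascaded channel by its maximal value, so the resulting $\sum_{j}P Z_j$ is a \emph{worst-case} (maximum) interference model; I would therefore emphasize that upper-bounding $I_{\rm R}$ translates into a conservative, guaranteed \emph{lower bound} on the coverage probability and SINR, which reconciles the ``lower bound'' in the lemma title with the $I_{\rm R}\le\sum_j P Z_j$ inequality. A secondary subtlety to argue carefully is the suppression of the $j$-dependence in $Y_m$, which is a distributional identification (equality in law across $j$) rather than a pointwise equality; this must be stated as such so that the later CLT-based ``sum of normal random variables'' approximation of $Z_j$ over the $M-1$ IRSs is applied to genuinely identically-distributed summands.
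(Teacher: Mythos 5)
Your proof is correct and follows essentially the same route as the paper: the same footnote-based distance approximations $r_{0,m_n}\approx r_{0,m}$, $t_{m_n,j}\approx t_{m,j}$ to factor out the path loss, followed by the same key bounding step, since your triangle-inequality argument is mathematically identical to the paper's device of setting $\beta_{m_n,j}=\theta_{m_n}-\psi_{m_n,j}-\phi_{0,m_n}=0$ (phase alignment is exactly the equality case of the triangle inequality), and the same definitions of $Y_m$ and $Z_j$. Your two added remarks --- that the ``lower bound'' in the title refers to the induced conservative bound on coverage/SINR rather than on $I_{\rm R}$ itself, and that the suppression of the $j$-index in $Y_m$ is an identification in distribution rather than a pointwise equality --- are clarifications the paper leaves implicit, not a different approach.
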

\begin{proof}
Taking $\beta_{m_n,j}=\theta_{m_n}-\psi_{m_n,j}-\phi_{0,m_n}$, the ${ I}_{\rm R}$ expression in \eqref{eq:IRSINt2} can be rewritten as follows:
\begin{align}
\begin{split}\label{eq:LapID}
{ I}_{\rm R}
\stackrel{(a)}{=}&\sum_{j\in \Phi_{B }}  P\sum_{m=1}^{M-1}   {r^{-{\alpha}}_{0,m}}{t^{-{\alpha}}_{m,j}} \quad\vert\sum_{n=1}^{N} \vert f_{m_n,j} \vert \; \vert g_{0,m_n} \vert{e^{j\beta_{m_n,j}} }\vert^2\\
\stackrel{(b)}{\leq}&\sum_{j\in \Phi_{B}}  P\sum_{m=1}^{M-1}    {r^{-{\alpha}}_{0,m}}{t^{-\alpha}_{m,j}} \vert\sum_{n=1}^{N} \vert f_{m_n,j} \vert \; \vert g_{0,m_n} \vert \vert^2\\
\stackrel{(c)}{=}&\sum_{j\in \Phi_{B}} P \sum_{m=1}^{M-1}   {r^{-{\alpha}}_{0,m}}{t^{-\alpha}_{m,j}} Y_m
\stackrel{(d)}{=}\sum_{j\in \Phi_{B}} P Z_j,
\end{split}
\end{align}
where  (a) is obtained by substituting $\;C_{m_n,j}=\left({r_{0,m_n}}{t_{m_n,j}}\right)^{-\alpha/2}$ and considering the approximation ${r_{0,m}}\approx{r_{0,m_n}}$, ${t_{m,j}}\approx{t_{m_n,j}}$ as discussed in footnote-1,  (b) follows from  $\beta_{m_n,j}=\theta_{m_n}-\psi_{m_n,j}-\phi_{0,m_n}=0$ which results in the maximum interference  (excluding nearest IRS) and hence referred to as \textit{worst case interference}. Finally,
step (c) and step (d) follow by defining  $Y_m=\vert\sum_{n=1}^{N} \vert f_{m_n,j} \vert \; \vert g_{0,m_n} \vert \vert^2$ and $ Z_j=\sum_{m=1}^{ 
 M-1}  r_{{0,m}}^{-\alpha}\; t_{{m,j}}^{-\alpha}\;Y_m$, respectively.
\end{proof}
In what follows, we derive the statistics of the aggregate interference observed at a typical user from multiple IRSs in a multi-BS scenario.
\begin{lem}[Distribution of the Aggregate Interference from Multiple IRSs  (Excluding the Nearest IRS)  in a Multi-BS Scenario] Leveraging the results in {\bf Lemma~3}, given
$
{ I}_{\rm R}{\leq} \sum_{j\in \Phi_{B}} P Z_j,
$
where
$ Z_j=\sum_{m=1}^{M-1} r_{{0,m}}^{-\alpha}\; t_{{m,j}}^{-\alpha}\;Y_m$ follows a Normal distribution with  mean and variance given by
 $$\mu_{Z_j}= \mathbb{E}[ r_{{0,m}}^{-\alpha}]  ((M-1) {t^2_{j}})^{-\alpha/2}\; (1+\lambda) \quad \mathrm{and} \quad \sigma^2_{Z_j}=2  \mathbb{V} [ r_{{0,m}}^{-\alpha}]  ((M-1) {t^2_{j}})^{-\alpha}(1+2\lambda),$$
\label{Prop2eq}
and $ Y_m$ represents the non-central Chi-square random variable with mean and variance $\mu_Y=(1+\lambda)$ and $\sigma_Y^2=2(1+2\lambda)$, respectively.
\label{lemmaIR}
\end{lem}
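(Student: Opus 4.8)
The plan is to obtain the Gaussian model for $Z_j$ through two nested applications of the central limit theorem (CLT): an inner one, over the $N$ elements of each IRS, that fixes the law of $Y_m$, and an outer one, over the $M-1$ interfering IRSs, that delivers the normality of $Z_j$ itself. I would start from the per-IRS term $Y_m=\vert\sum_{n=1}^{N}\vert f_{m_n,j}\vert\,\vert g_{0,m_n}\vert\vert^2$ appearing in step~(c) of \eqref{eq:LapID} and invoke the gamma approximation of Lemma~\ref{propID}, replacing each product $\vert f_{m_n,j}\vert\,\vert g_{0,m_n}\vert$ by an i.i.d. gamma variable $X_{G_n}(\kappa,\zeta)$. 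Writing $S_m=\sum_{n=1}^{N}X_{G_n}$, the CLT over the (large) number of elements $N$ gives $S_m\approx\mathcal{N}(\mu_S,\sigma_S^2)$ with $\mu_S=N\kappa\zeta$ and $\sigma_S^2=N\kappa\zeta^2$.

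Since $Y_m=S_m^2$ is the square of an asymptotically Gaussian variable, $Y_m$ is (up to the scale $\sigma_S^2$, which I absorb into the normalization) a non-central chi-square variable with one degree of freedom and non-centrality $\lambda=\mu_S^2/\sigma_S^2=N\kappa$. Reading off the standardized moments of this law gives $\mu_Y=1+\lambda$ and $\sigma_Y^2=2(1+2\lambda)$, which is exactly the claimed statistics of $Y_m$. This settles the inner step.

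For the outer step I would treat $Z_j=\sum_{m=1}^{M-1} r_{0,m}^{-\alpha}t_{m,j}^{-\alpha}Y_m$ as a sum of $M-1$ mutually independent contributions, since both the fading that generates each $Y_m$ and the BPP position of each IRS are independent across $m$. Because the IRSs sit at a fixed height $H_R>0$, every $r_{0,m}=\sqrt{\ell_m^2+H_R^2}\ge H_R$ is bounded away from zero, so the moments $\mathbb{E}[r^{-\alpha}]$ and $\mathbb{V}[r^{-\alpha}]$ are finite and a Lyapunov-type condition holds; the CLT over the large number $M-1$ of summands then yields the asserted normality of $Z_j$. Linearity and cross-term independence give $\mu_{Z_j}=\mathbb{E}[r^{-\alpha}]\,\mu_Y\,\sum_{m}t_{m,j}^{-\alpha}$ and, after retaining the dominant $\mathbb{V}[r^{-\alpha}]\sigma_Y^2$ piece of each product variance $\mathrm{Var}(r^{-\alpha}Y_m)$, $\sigma^2_{Z_j}=\mathbb{V}[r^{-\alpha}]\,\sigma_Y^2\,\sum_{m}t_{m,j}^{-2\alpha}$. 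Substituting $\mu_Y=1+\lambda$, $\sigma_Y^2=2(1+2\lambda)$ and collapsing the IRS-to-BS link distances through the effective-distance approximation $\sum_{m}t_{m,j}^{-\alpha}\approx((M-1)t_j^2)^{-\alpha/2}$ (equivalently replacing the $M-1$ squared link distances by their aggregate $(M-1)t_j^2$, with $t_{m,j}\approx t_j$) reproduces the stated $\mu_{Z_j}$ and $\sigma^2_{Z_j}$.

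The main obstacle I anticipate is precisely this last distance-aggregation step: passing from $\sum_m t_{m,j}^{-\alpha}$ and $\sum_m t_{m,j}^{-2\alpha}$ to the compact forms $((M-1)t_j^2)^{-\alpha/2}$ and $((M-1)t_j^2)^{-\alpha}$ requires committing to a specific effective-distance model for the cloud of interfering IRSs and defending it against the true BPP geometry, rather than carrying out a mechanical calculation. Two secondary but still delicate points are the simultaneous validity of the two CLTs at finite $N$ and finite $M$ (after the $t_{m,j}\approx t_j$ approximation the outer summands are identically distributed, so a triangular-array CLT is what is really being invoked), and the decision to keep only the leading $\mathbb{V}[r^{-\alpha}]\sigma_Y^2$ term in the product variance while discarding the cross terms. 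Making each of these approximations explicit is what turns this sketch into a complete proof.
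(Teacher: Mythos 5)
Your proposal follows essentially the same route as the paper's proof: a Level‑1 CLT over the $N$ elements yielding the (scaled) non‑central chi‑square law for $Y_m$, followed by a Level‑2 CLT over the $M-1$ interfering IRSs after the same mid‑point distance simplification $t_{m,j}\approx t_j$, the only cosmetic difference being that you moment‑match the Rayleigh product via the gamma approximation of Lemma~1 while the paper feeds the CLT with the exact Rayleigh‑product moments. The ``main obstacle'' you flag is worth recording but is equally unresolved in the paper: its proof stops at asserting that $Z_j$ ``will follow a normal distribution using CLT'' and never derives the stated constants, whose $((M-1)t_j^2)^{-\alpha/2}$ scaling cannot arise from summing $M-1$ identically distributed terms (that gives a factor of $M-1$, exactly as in your calculation, not $(M-1)^{-\alpha/2}$), nor does it justify the $\mathbb{V}[r_{0,m}^{-\alpha}]\,\sigma_Y^2$ product‑variance truncation you also single out.
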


 \begin{proof}
 Let $X_n= \vert{ g}_{0,{m_n}}\vert\vert{ f}_{{m_n,j}}\vert$ denote the product of two independent Rayleigh distributed random variables with mean and variance $\mu_x=\sigma \pi/2$ and $\sigma^2_x=2^2\sigma^2(1-\pi^2/16)$, respectively \cite{shafique2020optimization}. Since the IRS elements are typically large, we leverage on  central limit theorem (CLT) to depict 
 $X^\prime=\sum_{n=1}^N {X_n}$   follows a normal distribution with the mean and variance given by $\mu_{X^\prime}=N \mu_{X}$ and $\sigma^2_{X^\prime} =N \sigma^2_{X}$, respectively. We refer to this approximation  as  {\em Level-1 Gaussian approximation}. Consequently, $Y_m=\vert \sum_{n=1}^{N}  X_n  \vert^2$ will follow a non-central chi-square distribution  with unity degree of freedom $\nu=1$  and non-centrality parameter $\lambda=\frac{1}{2}\frac{\mu_{X^\prime}}{\sigma^2_{X^\prime} }$ \cite{shafique2020optimization}. Therefore, the mean and variance of $Y_m$ can be obtained as in\textbf{ Lemma~\ref{Prop2eq}}.

 Let $Y^\prime_m=  r_{{0,m}}^{-\alpha}\; t_{{m,j}}^{-\alpha}\;Y_m$ denote the product of three random variables ${t^{-\alpha}_{m,j}} $, $r_{{0,m}}^{-\alpha} $, and $Y_m$, where ${t^{-\alpha}_{m,j}} $, and  $r_{{0,m}}^{-\alpha} $ are correlated by cosine law as
${t^{-\alpha}_{m,j}}=\left({r^2_{0,m}}+d_j^2-2r_{0,m} d_j \cos\psi_{m}\right)^{-\alpha/2}$ \cite{kishk2020exploiting,zhu2020stochastic}. 
To simplify the analysis, we propose  an alternate formulation of  $t_{\rm {m,j}}$, i.e., instead of using  {\rm cosine law} we alternatively define $ {t_{m,j}}=\sqrt{\ell_{m,j}^2+(H_B-H_R)^2}$ (refer to the triangle in  Fig.~\ref{fig:Tania1}(b)).
Next, to enhance tractability, we consider that the typical IRS is located in the middle of the $BS_{\rm j}$ and typical user (i.e., $\ell_{\rm m,j}\approx \frac{\ell_{\rm j}}{2}$ ) which upon substitution gives 
\begin{equation}\label{tj}
   t_{\rm m,j}\approx t_{\rm j}=\sqrt{\left(\frac{\ell_{\rm j}}{2}\right)^2+(H_B-H_R)^2}.
\end{equation}  
 Subsequently, we have $Y^\prime_m\approx  r_{{0,m}}^{-\alpha}\; t_{{j}}^{-\alpha}\;Y_m$, and  $Z=\sum_{m=1}^{M-1} Y^\prime_m$ will follow a normal distribution using CLT as shown in \textbf{Lemma~4}. We refer to this as {\em Level-2 Gaussian approximation}. 
\end{proof}
	\begin{figure}[!t]
		\begin{center}			\includegraphics[scale=0.5]{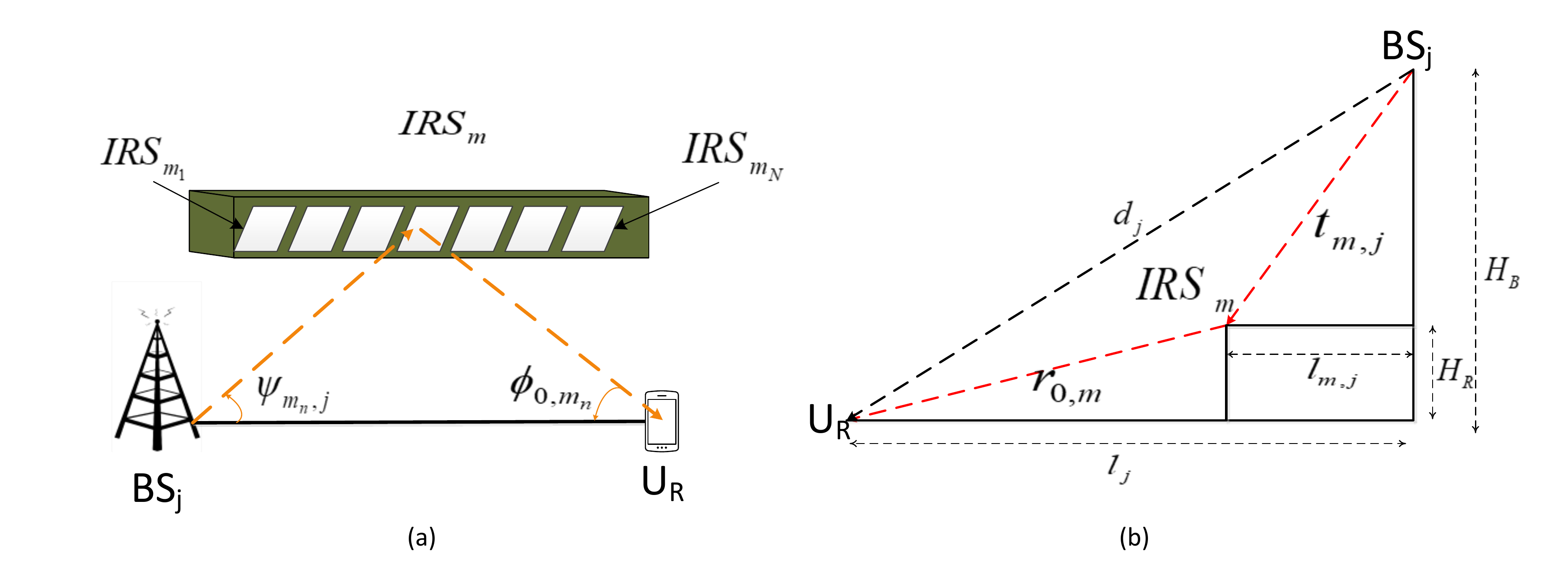}
			\caption{(a): Zoomed view of IRS functionality as a reflector, and (b) triangle explaining the  IRS distance approximation. }			\label{fig:Tania1}
		\end{center}
	\end{figure}

{ 
The factor $r_{{0,m}}^{-\alpha} t_{{m,j}}^{-\alpha} $  is  important  in modeling $Y^\prime_m$  as is evident in \textbf{Lemma~4}. Note that $ r_{{0,m}}^{-\alpha}$ and $t_{{m,j}}^{-\alpha}$ are correlated using cosine law. However, Fig.~4 shows that the correlation is weak  and thus  the approximation in \eqref{tj} is accurate. In the sequel, we first compare $\mathbb{E}[r_{{0,m}}^{-\alpha}\;] \mathbb{E}[ t_{{m,j}}^{-\alpha}]$, and $\mathbb{E}[r_{{0,m}}^{-\alpha} t_{{j}}^{-\alpha}]$ to show the weak correlation. Then, we demonstrate the validity of the proposed approximation   $\mathbb{E}[  r_{{0,m}}^{-\alpha}\;]\mathbb{E}[  t_{{j}}^{-\alpha}]$ to validate its accuracy in
Fig.~\ref{fig:Figure1_MeanApprox_Comparison_Final2}. It is also clear from the figure that $\lambda_R$ does not have any impact on the distances $t_{{m,j}}$ and $r_{{0,m}}$ on average.  It is clear from the right figure that increase in path-loss exponent $\alpha$ causes an increase in the path-loss distance term and hence decreases in $\mathbb{E}[r_{{0,m}}^{-\alpha}] \mathbb{E}[ t_{{m,j}}^{-\alpha}]$, $\mathbb{E}[ r_{{0,m}}^{-\alpha} t_{{j}}^{-\alpha}]$ and   $\mathbb{E}[  r_{{0,m}}^{-\alpha}\;]\mathbb{E}[  t_{{j}}^{-\alpha}]$ are evident.}
	\begin{figure}[ht!]
	\centering
	
	\includegraphics[scale=0.75]{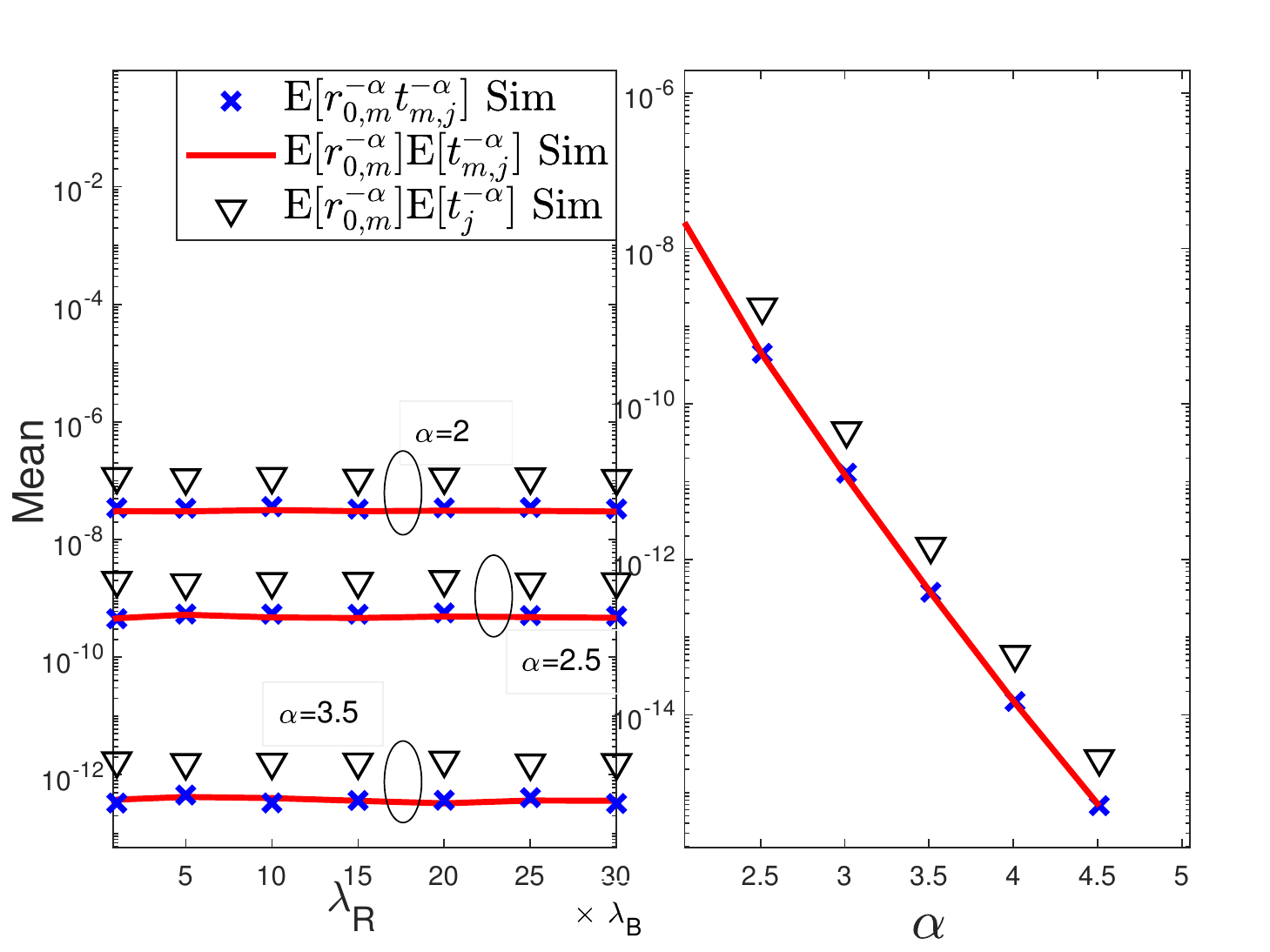}
	\caption{Comparison of   $\mathbb{E}[ r_{{0,m}}^{-\alpha} t_{{j}}^{-\alpha}] $,  $\mathbb{E}[r_{{0,m}}^{-\alpha}] \mathbb{E}[ t_{{m,j}}^{-\alpha} ] $ and the proposed approximation of $\mathbb{E}[  r_{{0,m}}^{-\alpha}\;]\mathbb{E}[  t_{{j}}^{-\alpha}]$ in  \eqref{tj}.}
	\label{fig:Figure1_MeanApprox_Comparison_Final2}	 
\end{figure}
In what follows, we derive the first  and second moment of $ r_{{0,m}}^{-\alpha} $ as is required in {\bf Lemma~4}.
\begin{lem}
\label{lemma_r_mean}
The $i$-th moment of the random variable $r_{{0,m}}^{-\alpha}$ can be derived for finite values of $R$ and when $R \rightarrow \infty$, respectively, as follows:

\begin{align}
\begin{split}
\label{rmena}
&\mathbb{E}[ (r_{{0,m}}^{-\alpha})^i]=\int_{0}^{R} r_{{0,m}}^{-i \alpha} \frac{\ell_{m}^2}{\pi R^2}  d\ell_{m} =\int_{\ell_{m}=0}^{R}({\ell_{m}^2+H_R^2})^{-\frac{i\alpha}{2}} \frac{\ell_{m}^2}{\pi R^2}  d\ell_{m}
\\&=\frac{-2 \left({H_R^2+ R^2}\right)^{-1-\frac{i\alpha}{2}} }  {(-2 +i \alpha) R^2}+
\frac{2 H_R^{2-i\alpha} }  {(-2 +i \alpha) R^2} 
\\&\mathrm{lim}_{R \rightarrow \infty} \mathbb{E}[ (r_{{0,m}}^{-\alpha})^i]=\frac{2 (H_R)
^{2-i\alpha}}  {(-2 +i \alpha) R^2}.
\end{split}
\end{align}
\end{lem}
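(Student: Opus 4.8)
The plan is to reduce the computation of the $i$-th moment of $r_{0,m}^{-\alpha}$ to a single elementary radial integral by exploiting the uniformity of the IRS locations. Since the $M$ IRSs form a BPP over the disk of radius $R$, a given IRS is placed uniformly over the disk \emph{area}, so its horizontal distance $\ell_m$ to the typical user at the origin has radial density $f_{\ell_m}(\ell)=\frac{2\ell}{R^2}$ on $[0,R]$. Using the geometric relation $r_{0,m}=\sqrt{\ell_m^2+H_R^2}$ from Sec.~II, I would write $(r_{0,m}^{-\alpha})^i=(\ell_m^2+H_R^2)^{-i\alpha/2}$ and express the desired moment as $\mathbb{E}[(r_{0,m}^{-\alpha})^i]=\int_0^R (\ell^2+H_R^2)^{-i\alpha/2}\,f_{\ell_m}(\ell)\,d\ell$, which is precisely the integral appearing in the first line of \eqref{rmena}.

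Next I would evaluate this integral in closed form. Substituting $u=\ell^2+H_R^2$ (so $du=2\ell\,d\ell$) absorbs the radial weight and reduces the integrand to the pure power $u^{-i\alpha/2}$, giving $\mathbb{E}[(r_{0,m}^{-\alpha})^i]=\frac{1}{R^2}\int_{H_R^2}^{H_R^2+R^2}u^{-i\alpha/2}\,du$. Provided $i\alpha\neq 2$, the antiderivative is $\frac{u^{1-i\alpha/2}}{1-i\alpha/2}$, and evaluating between the limits $u=H_R^2$ and $u=H_R^2+R^2$, together with the rewriting $1-i\alpha/2=\tfrac{2-i\alpha}{2}$, produces exactly the two boundary terms of \eqref{rmena} with the common prefactor $\frac{2}{(i\alpha-2)R^2}$.

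Finally, for the regime $R\to\infty$ I would compare the orders of the two boundary contributions. The lower-limit term is $\frac{2H_R^{2-i\alpha}}{(i\alpha-2)R^2}=\Theta(R^{-2})$, whereas the upper-limit term behaves like $(H_R^2+R^2)^{1-i\alpha/2}/R^2\sim R^{-i\alpha}$; hence, as long as $i\alpha>2$, the upper-limit term is asymptotically negligible relative to the lower-limit term and only the latter survives, which is the claimed limiting expression.

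The calculation is otherwise routine, so the one point that genuinely needs care is this asymptotic bookkeeping: the stated result is not a true limit (both expressions still carry the $R^{-2}$ scaling) but a leading-order approximation whose validity rests on the decay condition $i\alpha>2$. I would therefore state that condition explicitly when passing to large $R$, and separately flag the degenerate case $i\alpha=2$, where the power antiderivative collapses and must be replaced by a logarithmic one so that the division by $i\alpha-2$ in \eqref{rmena} is avoided.
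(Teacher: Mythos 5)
Your proposal is correct and follows essentially the same route as the paper: the paper offers no separate proof beyond the displayed computation, which is exactly your radial-density-plus-substitution argument (density of the horizontal distance, change of variable $u=\ell^2+H_R^2$, evaluation of the power antiderivative, then asymptotics in $R$). In fact your careful bookkeeping is more accurate than the printed lemma, and you are right to flag the two subtle points: the radial density should be $\frac{2\ell_m}{R^2}$ (the paper's $\frac{\ell_m^2}{\pi R^2}$ is evidently a typo for $\frac{2\pi\ell_m}{\pi R^2}$), the exponent on the upper boundary term should be $1-\frac{i\alpha}{2}$ rather than $-1-\frac{i\alpha}{2}$, and the stated ``$R\to\infty$'' expression is not a true limit (which would be zero) but the leading-order approximation obtained by dropping the $O(R^{-i\alpha})$ upper-limit term, valid precisely under the condition $i\alpha>2$ that you identify, with the case $i\alpha=2$ requiring the logarithmic antiderivative.
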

Finally, we derive the  LT of  the interference experienced by the typical user from all IRSs to compute the coverage probability.
\begin{lem}
  \label{lammaLTDIDforID}
    The LT of  interference experienced by the typical user through the  all IRSs (except nearest IRS to the typical user) $\mathcal{L}_{{I}_R}$ in the IRS- assisted communication mode  is given as follows:
    \begin{align}
\label{eq:LTInDirectDforID}
\begin{split}
\mathcal{L}_{ I_{\rm R}|r_{0,0}}(s )\approx \exp \left( 2 \pi \lambda_B\frac{4}{\alpha} \sum_{i=1}^{\infty} \frac{ b_i(s)}{{i-\frac{2}{\alpha}}} \;\left( {X_R^{i-\frac{2}{\alpha}}}-{X_0^{i-\frac{2}{\alpha}}} \right) \right),
\end{split}
\end{align}  
where $b_i(s)$ denotes the Taylor's series expansion coefficients of $\exp(-k_1(s) x-k_2(s) x^2)$ and $k_1(s)=\mu_{Z_j} s \; P/t_j^{-\alpha} $, and  $k_2(s)=\frac{1}{2 t_j^{-2\alpha}} \sigma^2_{Z_j} s^2  P^2$.

\end{lem}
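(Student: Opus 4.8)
The plan is to start from the worst-case bound established in Lemma~3, namely $I_{\rm R}\le \sum_{j\in\Phi_B} P Z_j$, and to Laplace-transform the right-hand side, whose transform upper-bounds $\mathcal{L}_{I_{\rm R}}$ in the $s$-domain. Writing $\mathcal{L}_{I_{\rm R}|r_{0,0}}(s)=\mathbb{E}[e^{-s I_{\rm R}}]\approx \mathbb{E}[e^{-s\sum_{j\in\Phi_B} P Z_j}]$, I would first condition on the BS point process and treat the marks $Z_j$ as conditionally independent across BSs, so that the expectation factorizes as $\mathbb{E}_{\Phi_B}\big[\prod_{j\in\Phi_B}\mathbb{E}_{Z_j}[e^{-sPZ_j}]\big]$. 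This cross-BS independence is the one genuine approximation: the same $M-1$ interfering IRSs illuminate every BS, so the $Z_j$ actually share the fading terms $Y_m$ and the distances $r_{0,m}$, and decoupling them is exactly what turns the bound into the ``$\approx$'' of the statement.

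Next I would invoke Lemma~4, which gives each $Z_j$ as Gaussian with mean $\mu_{Z_j}$ and variance $\sigma^2_{Z_j}$ (these being fully explicit through the moments of Lemma~5), and apply the Gaussian generating function $\mathbb{E}[e^{tZ}]=\exp(\mu t+\tfrac12\sigma^2 t^2)$ with $t=-sP$ to get $\mathbb{E}_{Z_j}[e^{-sPZ_j}]=\exp\!\big(-sP\mu_{Z_j}+\tfrac12 s^2P^2\sigma^2_{Z_j}\big)$. Since Lemma~4 makes $\mu_{Z_j}\propto t_j^{-\alpha}$ and $\sigma^2_{Z_j}\propto t_j^{-2\alpha}$, this per-BS factor depends on the BS location only through the single distance $t_j$; collecting the BS-independent constants into $k_1(s)=\mu_{Z_j} sP/t_j^{-\alpha}$ and $k_2(s)=\frac{1}{2t_j^{-2\alpha}}\sigma^2_{Z_j}s^2P^2$ casts it in the compact form $\exp\!\big(-k_1(s)t_j^{-\alpha}-k_2(s)t_j^{-2\alpha}\big)=:f(t_j)$, with $x:=t_j^{-\alpha}$ as the natural variable.

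I would then apply the PGFL of the homogeneous PPP $\Phi_B$, just as in \eqref{eq:IBDirec}, to turn the product into $\mathcal{L}_{I_{\rm R}|r_{0,0}}(s)=\exp\!\big(-2\pi\lambda_B\int (1-f(t_j))\,\ell_j\,d\ell_j\big)$, integrating over the two-dimensional BS radius $\ell_j$. Using the tractable distance model \eqref{tj}, $t_j^2=(\ell_j/2)^2+(H_B-H_R)^2$, I change variables so that $\ell_j\,d\ell_j$ becomes proportional to $d(t_j^2)$, which is the step producing the prefactor $4/\alpha$. The non-elementary integrand $f(t_j)=\exp(-k_1 x-k_2 x^2)$ is expanded as its Taylor series $\sum_{i=0}^\infty b_i(s)x^i$, whence $1-f=-\sum_{i\ge1}b_i(s)\,t_j^{-i\alpha}$, and each monomial is integrated separately; the power $t_j^{-i\alpha}$ against $\ell_j\,d\ell_j\propto d(t_j^2)$ yields a term $\propto (t_j^{-\alpha})^{\,i-2/\alpha}/(i-2/\alpha)$, and evaluating between the distance limits gives the factor $X_R^{\,i-2/\alpha}-X_0^{\,i-2/\alpha}$. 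Assembling the constants reproduces the claimed expression.

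The main obstacle I anticipate is justifying the term-by-term integration, i.e.\ the interchange of the infinite sum $\sum_i b_i(s) x^i$ with the integral over the BS field. This is legitimate because the nonzero height offset $|H_B-H_R|>0$ together with the finite coverage radius confine $x=t_j^{-\alpha}$ to a bounded interval with endpoints $X_0$ and $X_R$; since $\exp(-k_1 x-k_2 x^2)$ is entire in $x$, its Taylor series converges uniformly on that compact interval, so summation and integration commute and the resulting series converges. Everything else is bookkeeping of the change of variables $\ell_j\mapsto t_j$ and the collection of constants, which I would cross-check against the structurally identical direct-link computation in \eqref{eq:IBDirec}–\eqref{eq:LTDirectforID}, with the conditional independence of the $Z_j$ remaining the sole source of the approximation in the statement.
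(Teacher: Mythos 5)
Your proposal follows the paper's proof essentially step for step: the worst-case bound from Lemma~3, factorization of the expectation via conditional independence of the $Z_j$ across BSs, the Gaussian Laplace transform from Lemma~4, the PGFL over $\Phi_B$ combined with the distance model \eqref{tj}, the change of variables that produces the $4/\alpha$ prefactor and the limits $X_0$, $X_R$, and finally the term-by-term integration of the Taylor series defining $b_i(s)$ (your added justification of the sum--integral interchange via uniform convergence on the compact interval is a welcome refinement the paper omits). The only wrinkle is a sign: the correct Gaussian transform $\exp\!\left(-sP\mu_{Z_j}+\tfrac12 s^2P^2\sigma^2_{Z_j}\right)$ that you write first is inconsistent with the compact form $\exp\!\left(-k_1(s)x-k_2(s)x^2\right)$ with positive $k_2(s)$ that you then adopt, but this discrepancy is inherited from the paper's own step (a) rather than being a flaw in your approach.
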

\begin{proof}
Using \eqref{eq:LapID} in \eqref{eq:IRSINt2}, we derive LT $\mathcal{L}_{ {I}_R}(s)$ as: 
\begin{align}
\begin{split}\label{eq:LapIDProof}
\mathcal{L}_{ I_R|{r_{0,0}}}(s)&=
\mathbb{E}[ e^{-s  I_R}]=\mathbb{E}[e^{-s  \sum_{j\in \Phi_{B }}  P Z_j}]=\mathbb{E}[ \prod_{j\in \Phi_{B }} e^{-s \;P Z_j}]\\ \stackrel{(a)}{=} & \mathbb{E}[ \prod_{j\in \Phi_{B }}\mathbb{E}_Z[ e^{-s \; P Z_j}]]=\mathbb{E}[ \prod_{j\in \Phi_{B }}  e^{-(\mu_{Z_j} s \; P +\frac{1}{2} \sigma^2_{Z_j} s^2  P^2)  }]\\ \stackrel{(b)}{=} &\exp \left(- 2 \pi \lambda_B \int_{0}^{R}\left(1- e^{-(k_1(s) t^{-\alpha}  +k_2(s)  t^{-2\alpha})  }\right) \ell d\ell \right)\\ 
\stackrel{(c)}{=} &\exp \left( -2 \pi \lambda_B\frac{-4}{\alpha} \int_{X_0}^{X_R}\left(1- e^{-(k_1(s)X  +k_2(s) X^2)  }\right) X^{-\frac{2}{\alpha}-1} dX \right)\\ 
\stackrel{(d)}{=} &\exp \left( 2 \pi \lambda_B\frac{4}{\alpha} \int_{X_0}^{X_R}\sum_{i=1}^{\infty}b_i(s) \; {X^{i-\frac{2}{\alpha}-1}} dX \right)\\ 
\stackrel{(e)}{=} &\exp \left( 2 \pi \lambda_B\frac{4}{\alpha} \sum_{i=1}^{\infty} \frac{ b_i(s)}{{i-\frac{2}{\alpha}}} \;\left( {X_R^{i-\frac{2}{\alpha}}}-{X_0^{i-\frac{2}{\alpha}}} \right)  \right),
\end{split}
\end{align} 
where (a)  follows from the LT of $Z_j$ where $Z_j$ is a Gaussian random variable with $\mu_{Z_j}$, and $\sigma^2_{Z_j}$ is given by {\bf Lemma~4}, (b) follows by  substituting $k_1(s)=\mu_{Z_j} s \; P/t_j^{-\alpha} $ and $k_2(s)=\frac{1}{2 t_j^{-2\alpha}} \sigma^2_{Z_j} s^2  P^2$ and then we apply PGFL w.r.t $\ell_{\rm j}$ where $t=\sqrt{\left(\frac{\ell}{2}\right)^2+(H_B-H_R)^2}$and $t_{\rm j}$ is given in \eqref{tj}. For simplicity, (c) is obtained by changing of variable $\ell$ to $X$, i.e., $X=\left({\frac{\ell^2}{4}+(H_B-H_R)^2}\right)^{-\alpha/2}$, where  $X_0=\left({H_B-H_R}\right)^{-\alpha}$ and $X_R=\left({\frac{{R^2}}{4}+(H_B-H_R)^2}\right)^{-\alpha/2}$. {Note that (d) is obtained by using Taylor's series expansion of $\exp(-k_1 x-k_2 x^2)$ and $b_i(s)$ denotes the coefficients of the expanded Taylor series. Finally, (e) is obtained by solving the integral.}
\end{proof}

		\begin{figure*}[t]
	\begin{minipage}{0.48\textwidth}
		\includegraphics[scale=0.60]{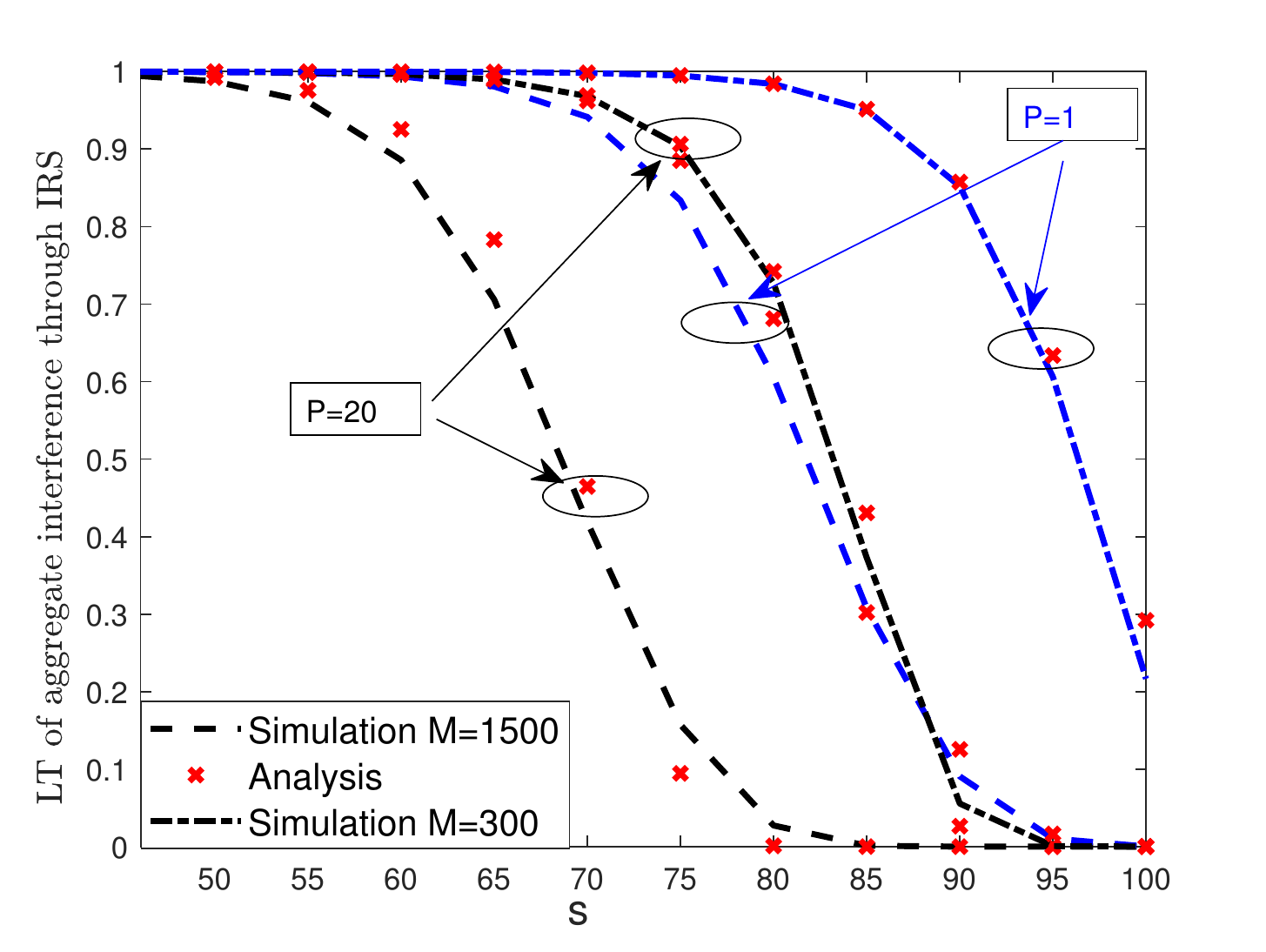}
	\caption{Conditional LT  of aggregate interference from IRSs (excluding the nearest IRS), $\mathcal{L}_{I_R}(s)$ in  \eqref{eq:LTInDirectDforID}, for $\lambda_R=2 \lambda_0, M=300$ and $\lambda_R=10 \lambda_0, M=1500$ with $P=1$ and $P=20$, using 
	Monte-Carlo simulations.}
	\label{fig:Figure3_MGFInDirectFinal1}	
		\end{minipage}\hfill
		\begin{minipage}{0.48\textwidth}
		\includegraphics[scale=0.60]{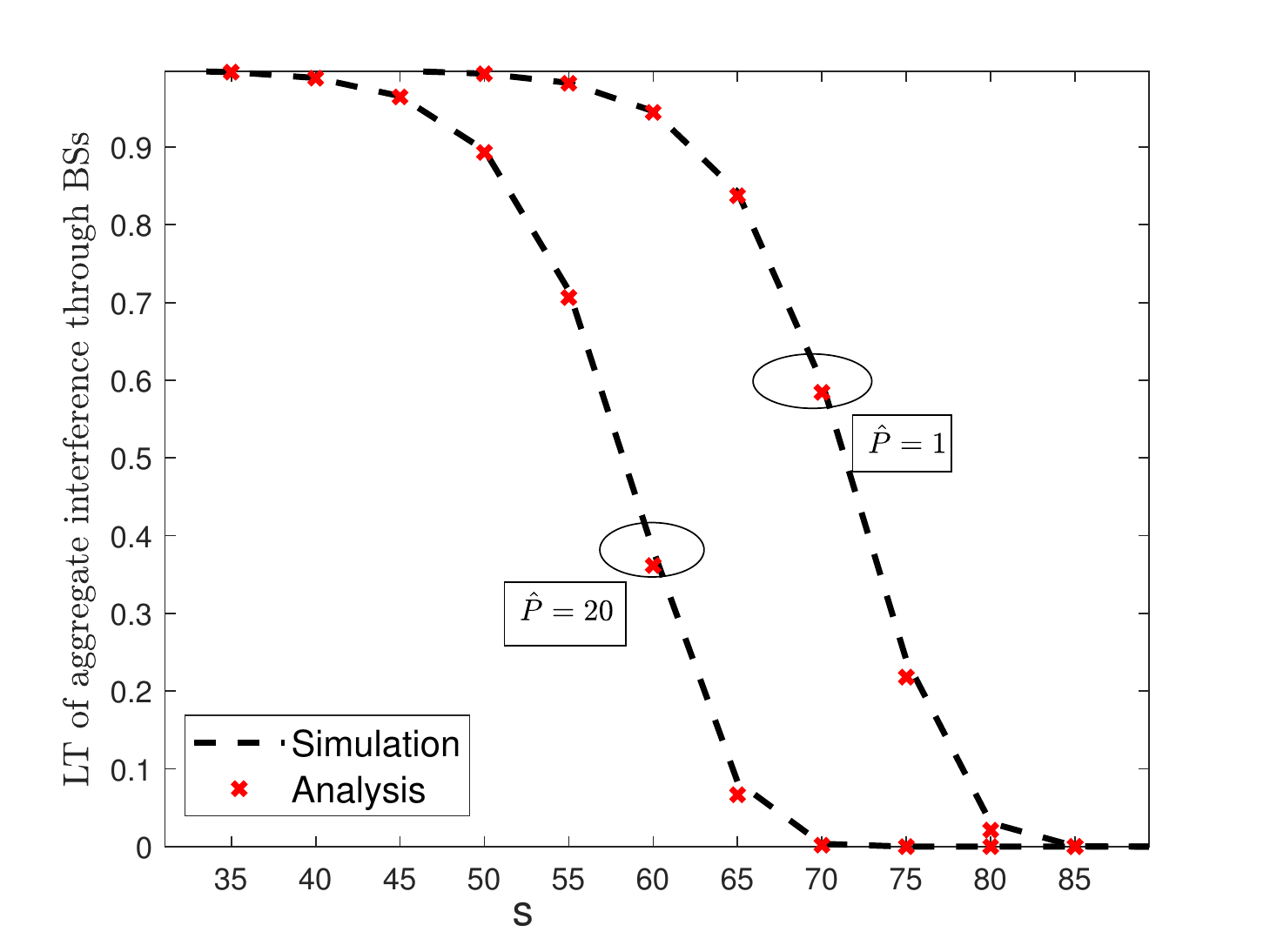}
		\caption{Conditional LT of aggregate interference from BSs (excluding the nearest BS in direct mode),  $\mathcal{L}_{I_B}(s)$ \eqref{eq:LTDirectforID}, for  $\hat{P}=P=1$, $\hat{P}=P=20$, and $\lambda_R=2 \lambda_0, M=300$, using Monte-Carlo simulations. }
		\label{fig:Figure2_MGFDirectFinal1}	 	
	\end{minipage}\hfill
	\end{figure*}

{
Fig.~\ref{fig:Figure3_MGFInDirectFinal1} validates the accuracy of LT of aggregate interference from  IRSs  for different number of IRSs, i.e., $M = 300$ and $M=1500$ and transmission power $P=1$~W and $P=20$~W. This figure  shows that, for a given value of $s$, increasing transmission power and IRS intensity decreases IRS interference. Clearly, the interference in higher power and higher intensity trend dominates compared to all other combinations of  power and IRS intensity. }{	
Similarly, Fig.~\ref{fig:Figure2_MGFDirectFinal1} validates the accuracy of  the LT of the aggregate interference from BSs (excluding the nearest BS) given in \eqref{eq:LTDirectforID}  as a function of  $s$. Again, the LT of aggregate interference  decreases with  increasing transmission power of BSs (i.e., the interference increases). Unlike 
Fig.~\ref{fig:Figure3_MGFInDirectFinal1},  neither the IRS intensity nor the total number of IRSs $M$ have any effect on $\mathcal{L}_{I_B}$ as the direct transmissions are independent of $\lambda_R$ or $M$.}


\section{Coverage Probability and Ergodic Capacity Characterization }
In this section, we first derive the coverage probability of an IRS-assisted  user and then the coverage probability of users who are supported by direct transmissions. Then, we derive the ergodic capacity and energy efficiency  of an IRS-assisted  user and the user supported by direct transmission from BS. Finally, we derive the overall network coverage, ergodic rate, and energy-efficiency considering the  fraction of IRS-assisted and direct users in the network.

\subsection{Coverage Probability (IRS-assisted Transmission)}
The coverage probability of the typical user  associated to nearest IRS in the IRS-assisted indirect mode of communication is defined as $ C_{\rm {ID}}=\Pr(\gamma_{ID}\ge \tau)$, where the SINR for  IRS-assisted indirect transmission is given as follows:
\begin{align}
    \gamma_{ID}=  \frac{S_{R_0}}
{I_{\rm B}+I_{R}+N_0}. \quad
      \end{align}
The coverage probability can be calculated numerically by using Gil-Paleaz inversion theorem  \cite{di2014stochastic} as shown in the following:
\begin{align}
\begin{split}
\label{eq:CID}
    C_{\mathrm{ID}}= &\Pr\left( \gamma_{ID} \ge \tau \right)\\=&\Pr\left({S_{R_0} (r_{0,0})}-\tau {I}_{R}\ge \tau \;  {I}_{B}+\tau \; N_0 \right)=\Pr\left(\Omega\ge \tau  {I}_{B}+\tau N_0 \right)\\
    =&\mathbb{E}_{r_{0,0}}\left[\frac{1}{2}-\frac{1}{\pi} \int_{0}^{\infty} \frac{\rm{Im}[\phi_{\Omega|r_{0,0}} (\omega) \mathcal{L}_{I_B}(- j \omega \tau ) e^{j \omega \tau  N_0 }]}{\omega} d\omega\right]\\
    =&\frac{1}{2}-\frac{1}{\pi} \int_{0}^{\infty} \frac{\rm{Im}[\phi_{\Omega} (\omega)\mathcal{L}_{I_B}(- j \omega \tau )e^{j \omega \tau  N_0 }]}{\omega} d\omega,
\end{split}
\end{align}
where 
\begin{align}
\label{OmegaRR}
\begin{split}
    \phi_{\Omega}(\omega)=&\mathbb{E}_{r_{0,0}}[\phi_{\Omega|r_{0,0}} (\omega)]=\mathbb{E}_{r_{0,0}}[e^{-j \omega \Omega}]=\mathbb{E}_{r_{0,0}}[ \mathcal{L}_{{S}_{R_0} |r_{0,0}}\mathcal{L}_{{I}_{R} |r_{0,0}}(-j \omega\tau )].
    \end{split}
\end{align}
Note that $\mathcal{L}_{I_B}(- j \omega \tau ) =\mathbb{E}_{d_0}[\mathcal{L}_{I_B|d_{0}}(- j \omega \tau )]$ is independent of $r_{0,0}$. Substituting \eqref{eq:LTInDirectDforID} and \eqref{lapsro} in \eqref{OmegaRR} and then substituting  \eqref{OmegaRR} and  \eqref{eq:LTDirectforID} in \eqref{eq:CID}, we obtain $C_{\rm ID}$. The distribution of the distance of the  nearest IRS at height $H_R$ to the typical user  is  given as follows \cite{srinivasa2009distance}:
\begin{align}
\label{eq:Dist}
     f_{r_0}(r_{0,0})=& \frac{2M r_{0,0}}{R^2} \left(1-\frac{r_{0,0}^2-H_R^2}{R^2}\right)^{M-1}.
\end{align}
Also,  the distance of the  nearest BS at height $H_B$  to the typical user is given as:
\begin{align}
\label{eq:Dist2}
        f_{d_0}(d_0)=&2\pi \lambda_{B} d_0 e^{-\pi \lambda_{B} (d_0^2-H_B^2)}.
\end{align}
\subsection{Coverage Probability (Direct Transmission)}
The  coverage probability of the typical user  with the direct mode of communication is defined as $ C_{\rm D}=\Pr(\gamma_{D}\ge \tau)$, where the SINR of the direct communication mode is given as \begin{align}
\label{eq:gamaaDirct}
     \gamma_{D}= \frac{S_{D_0}}{\hat{I}_{ B}+\hat I_{R}+N_0} .
     \end{align} Now by substituting \eqref{SigPowerDirect}, the coverage probability $C_{\rm D}$ can be written as follows:
\begin{align}
\label{eq:PcD}
\begin{split}
    C_{\rm {D}}= &\Pr\left({ \vert h_0\vert^2 }\ge d^{\alpha}_{0} \frac{\tau  \beta^{-2}  }{{\hat{P}}}\; \left( \hat I_{\rm B}+\hat I_{\rm R}+N_0 \right)\right)\\=& \mathbb{E}_{d_0}\left[e^{-  \frac{\tau \;d^{\alpha}_{0} N_0}{\beta^{2}\hat{P}} }\;\mathcal{L}_{\hat I_B}\left( \frac{\tau \;d^{\alpha}_{0}}{\beta^{2}\hat{P}} \right)\;\mathcal{L}_{ \hat I_R}\left( \frac{\tau \;d^{\alpha}_{0}}{\beta^{2}\hat{P}} \right)\right],
\end{split}
\end{align}
where
$\mathcal{L} (\cdot)$ is the LT and $d_0$ is the distance between the typical user and the nearest BS, i.e., $d_0=\sqrt{\ell_0^2+H_B^2}$. 
The distribution of the distance of the  nearest BS  is provided in \eqref{eq:Dist2}.

\begin{corollary}	
The LT of the aggregate interference to the typical user through all the BSs (except the associated BS$_0$) $ \mathcal{L}_{\hat{I}_B}$ in the the  direct  mode can then be  obtained as follows:
 \begin{align}
\label{eq:LTDirect}
\begin{split}
\mathcal{L}_{\hat I_B|{d_0}}(s )= \mathrm{exp}\left(- 2 \pi \lambda_B \frac{ d_{0}^{2-\alpha} s \hat{P} \beta^2  }{\alpha-2} \;_2 F_1\left(1,\frac{-2+\alpha}{\alpha};2-\frac{2}{\alpha}; -s\; \hat{P} \beta^2 d_{0}^{-\alpha} \right) \right), 
\end{split}
\end{align}
which is similar to \eqref{eq:LTDirectforID} with $P$ replaced with $\hat{P}$ for the direct mode.
    	\end{corollary}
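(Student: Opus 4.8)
The plan is to recognize that the direct-mode BS interference $\hat{I}_B$ in \eqref{eq:D_forDirect} has exactly the same form as the IRS-assisted-mode BS interference $I_B$ whose Laplace transform was computed in \eqref{eq:IBDirec}--\eqref{eq:LTDirectforID}: both are a sum over the same point process $\Phi_{B}\backslash 0$ of the fading power $|h_j|^2$ scaled by the path-loss $\beta^2 d_j^{-\alpha}$, the \emph{only} difference being that the transmit power is $\hat{P}$ in the direct mode rather than $P$. Since $\hat{P}$ is a deterministic constant and both the fading statistics ($|h_j|^2\sim\exp(1)$) and the interferer locations are identical across the two modes, the entire derivation transfers verbatim under the substitution $P\mapsto\hat{P}$.

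Concretely, I would condition on the serving-BS distance $d_0$ and proceed through the same three steps as in \eqref{eq:IBDirec}. First, using that the $|h_j|^2$ are i.i.d.\ unit-mean exponential and independent across interferers, the expectation over fading factorizes and each factor $\mathbb{E}[e^{-s\hat{P}\beta^2|h_j|^2 d_j^{-\alpha}}]$ becomes $\bigl(1+s\hat{P}\beta^2 d_j^{-\alpha}\bigr)^{-1}$, i.e.\ the LT of an $\exp(1)$ variable. Second, I would apply the probability generating functional of the homogeneous PPP $\Phi_B$ with intensity $\lambda_B$; because the nearest BS is the serving one and is excluded from the interference, the radial integral runs outward from the contact distance. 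Third, substituting the $3$D distance $d_j=\sqrt{\ell_j^2+H_B^2}$ reduces the remaining expression to the single integral
\begin{equation*}
\mathcal{L}_{\hat I_B|d_0}(s)=\exp\left(-2\pi\lambda_B\int_{d_0}^{\infty}\frac{s\hat{P}\beta^2 d_j^{-\alpha}}{1+s\hat{P}\beta^2 d_j^{-\alpha}}\,d_j\,dd_j\right).
\end{equation*}

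The remaining task is to evaluate this integral in closed form, which is the only genuinely analytic step. It is, however, precisely the integral already solved in passing from \eqref{eq:IBDirec} to \eqref{eq:LTDirectforID}: a change of variable turns it into an incomplete-Beta-type integral that closes into the Gauss hypergeometric function ${}_2F_1\!\left(1,\tfrac{-2+\alpha}{\alpha};2-\tfrac{2}{\alpha};-s\hat{P}\beta^2 d_0^{-\alpha}\right)$, with prefactor $\tfrac{d_0^{2-\alpha}s\hat{P}\beta^2}{\alpha-2}$ and convergence of the improper integral guaranteed by $\alpha>2$. Relabeling $P\mapsto\hat{P}$ in \eqref{eq:LTDirectforID} therefore yields \eqref{eq:LTDirect} directly. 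The main ``obstacle'' is thus not a fresh computation but the observation that direct and IRS-assisted BS interference are structurally identical up to the power constant; the only point requiring a line of care is to keep $\hat{I}_B$ (the BS contribution treated here) separate from the IRS contribution $\hat{I}_R$ in \eqref{eq:IRSINt}, which is handled by a distinct Laplace transform.
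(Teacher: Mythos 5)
Your proposal is correct and takes essentially the same approach as the paper: the paper also justifies this corollary by observing that $\hat{I}_B$ is structurally identical to $I_B$ and simply relabels $P$ as $\hat{P}$ in \eqref{eq:LTDirectforID}. The steps you spell out (factorizing the expectation via the unit-mean exponential fading LT, applying the PGFL of $\Phi_B$ outside the contact distance, substituting $d_j=\sqrt{\ell_j^2+H_B^2}$, and closing the integral into the ${}_2F_1$ form for $\alpha>2$) are exactly the derivation the paper performs in \eqref{eq:IBDirec}--\eqref{eq:LTDirectforID}.
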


\begin{corollary}
         \label{lammaLTDID}
   Similar to Lemma~\ref{lammaLTDIDforID}, the LT of  interference experienced by the typical user through the  all the IRSs and all the BSs  $\mathcal{L}_{\hat{I}_R}$ in direct communication mode  is given as follows:
    \begin{align}
\label{eq:LTInDirectDforD29}
\begin{split}
\mathcal{L}_{\hat{I}_{\rm R}}(s )\approx \exp \left( 2 \pi \lambda_B\frac{4}{\alpha} \sum_{i=1}^{\infty} \frac{ \hat{b}_i(s)}{{i-\frac{2}{\alpha}}} \;\left( {X_R^{i-\frac{2}{\alpha}}}-{X_0^{i-\frac{2}{\alpha}}} \right) \right),
\end{split}
\end{align}  
where $\hat{b}_i(s)$ denotes the Taylor's series expansion coefficients of $\exp(-\hat{k}_1(s) x-\hat{k}_2(s) x^2)$ and $\hat{k}_1(s)=\hat{\mu}_{Z_j} s \; \hat{P}/t_j^{-\alpha} $, and  $\hat{k}_2(s)=\frac{1}{2 t_j^{-2\alpha}} \hat{\sigma}^2_{Z_j} s^2  \hat{P}^2$,  $\hat{\mu}_{Z_j}= \mathbb{E}[ r_{{0,m}}^{-\alpha}]  (M {t^2_{j}})^{-\alpha/2}\; (1+\lambda) \quad \mathrm{and} \quad \hat{\sigma}^2_{Z_j}=2  \mathbb{V} [ r_{{0,m}}^{-\alpha}]  (M {t^2_{j}})^{-\alpha}(1+2\lambda)$.
\end{corollary}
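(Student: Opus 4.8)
The plan is to reproduce the derivation of \textbf{Lemma~\ref{lammaLTDIDforID}} essentially verbatim, since the direct-mode IRS interference $\hat{I}_{\rm R}$ in \eqref{eq:IRSINt} has exactly the same cascaded structure as the IRS-assisted-mode interference ${I}_{\rm R}$ in \eqref{eq:IRSINt2}, differing only in two respects: (i)~the transmission power $P$ is replaced by the direct-mode power $\hat{P}$, and (ii)~the inner sum over IRSs runs over all $M$ surfaces, because in the direct mode the typical user is served by a BS and there is no serving IRS to exclude. Consequently the entire argument carries over with the mechanical substitutions $P\to\hat{P}$ and $(M-1)\to M$ propagated into the mean and variance of the per-BS interference contribution.

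First I would apply the worst-case phase bounding of \textbf{Lemma~3} (setting $\beta_{m_n,j}=0$) to write $\hat{I}_{\rm R}\leq \sum_{j\in\Phi_{B}}\hat{P}\,\hat{Z}_j$, where now $\hat{Z}_j=\sum_{m=1}^{M} r_{0,m}^{-\alpha}\,t_{m,j}^{-\alpha}\,Y_m$ and $Y_m$ is the non-central chi-square random variable introduced in \textbf{Lemma~\ref{lemmaIR}}. Next, I would invoke the same Level-1 and Level-2 Gaussian approximations as in \textbf{Lemma~\ref{lemmaIR}}: since $\hat{Z}_j$ is now a sum of $M$ (rather than $M-1$) i.i.d.\ terms, the CLT gives a normal distribution whose mean and variance are obtained from those in \textbf{Lemma~\ref{lemmaIR}} by the replacement $(M-1)\to M$, yielding $\hat{\mu}_{Z_j}=\mathbb{E}[r_{0,m}^{-\alpha}](M t_j^2)^{-\alpha/2}(1+\lambda)$ and $\hat{\sigma}^2_{Z_j}=2\,\mathbb{V}[r_{0,m}^{-\alpha}](M t_j^2)^{-\alpha}(1+2\lambda)$, with the moments of $r_{0,m}^{-\alpha}$ supplied by \textbf{Lemma~\ref{lemma_r_mean}}.

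With the Gaussian model for $\hat{Z}_j$ in hand, I would repeat the chain \eqref{eq:LapIDProof} step for step: apply the Gaussian MGF in step~(a) to collapse $\mathbb{E}_Z[e^{-s\hat{P}\hat{Z}_j}]$ into $\exp(-(\hat{\mu}_{Z_j}s\hat{P}+\tfrac{1}{2}\hat{\sigma}^2_{Z_j}s^2\hat{P}^2))$; apply the PGFL over $\Phi_B$ using the distance approximation $t_j$ of \eqref{tj} in step~(b), which introduces $\hat{k}_1(s)=\hat{\mu}_{Z_j}s\hat{P}/t_j^{-\alpha}$ and $\hat{k}_2(s)=\tfrac{1}{2 t_j^{-2\alpha}}\hat{\sigma}^2_{Z_j}s^2\hat{P}^2$; change variables $\ell\to X$ in step~(c) with the same limits $X_0$ and $X_R$; expand $\exp(-\hat{k}_1 X-\hat{k}_2 X^2)$ as a Taylor series with coefficients $\hat{b}_i(s)$ in step~(d); and integrate term by term in step~(e). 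This reproduces \eqref{eq:LTInDirectDforD29} exactly.

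The only genuine subtlety, and hence the point I would flag as the main obstacle, is correctly accounting for the IRS count: in the direct mode \emph{all} $M$ IRSs act as interferers and no surface is removed, whereas in \textbf{Lemma~\ref{lammaLTDIDforID}} the nearest (serving) IRS was excluded, giving $M-1$. Everything else is a transparent substitution $P\to\hat{P}$. In particular, the convergence of the Taylor expansion in step~(d) and the validity of the two-level Gaussian approximation are inherited directly from the proof of \textbf{Lemma~\ref{lammaLTDIDforID}} and need not be re-established.
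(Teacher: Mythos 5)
Your proposal is correct and follows essentially the same route as the paper: the paper proves this corollary precisely by invoking the proof of Lemma~\ref{lammaLTDIDforID} with the substitutions $P\to\hat{P}$ and $(M-1)\to M$, the latter because in direct mode no serving IRS is excluded, which is exactly the point you flag. Nothing further is needed.
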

Note that the difference arises from the fact that $\hat{I}_R$ has $M$ interfering IRSs in {\bf Corollary~4}, whereas in \textbf{Lemma~\ref{lammaLTDID}}, we have $M-1$ IRSs  contributing to the aggregate interference $I_R$.  
Finally, substituting \eqref{eq:LTDirect} and \eqref{eq:LTInDirectDforD29} in \eqref{eq:PcD}, we obtain the  coverage probability of direct link $C_{\rm D}$ conditioned on the distance {$d_0$}.


 \subsection{Ergodic Capacity}
   The achievable ergodic capacity of a typical user can be given by using the coverage probability expressions as shown below \cite{tabassum2018coverage}:
   $$
   \mathbb{E}[\log_2(1+\mathrm{SINR})]=\frac{1}{\rm ln(2)}\int_{0}^{\infty} \frac{P( \mathrm{SINR}>t)}{t}\; dt.$$ 
   However, the aforementioned evaluation adds one more layer of integration on top of the coverage probability. Therefore, we use an alternative LT-based approach to evaluate ergodic capacity by leveraging on Hamdi's lemma \cite{hamdi2010useful} given as follows:
\begin{align}
\mathbb{E}\left[\ln \left(1+\frac{X}{Y+N_0} \right)\right]=\int_{0}^{\infty}\frac{\mathcal{L}_Y(s)-\mathcal{L}_{X,Y}(s)}{s} \exp(-N_0 s) ds,
      \end{align}
      where $\mathcal{L}_Y(s)$ and $\mathcal{L}_{X,Y}(s)$ represent the LT of $Y$ and joint LT of $X$ and $Y$, respectively. 
      Subsequently, we derive  the ergodic capacity of the typical IRS-assisted user as follows: 
\begin{align}
\label{eq:rateIRS}\begin{split}
R_{ID}=\int_{0}^{\infty}\frac{\mathcal{L}_{{I}_B}(s)\mathcal{L}_{{I}_R}(s)-\mathcal{L}_{{I}_B}(s)\mathcal{L}_{{I}_R}(s)\mathcal{L}_{S_{R_0}}(s)}{s} \exp(-N_0 s) ds,
\end{split}
\end{align}
Similarly, the ergodic capacity of the typical user in direct mode $R_{\rm D}$ is given as follows:
\begin{align}
\label{eq:RateDirect}
\begin{split}
R_{D}=\int_{0}^{\infty}\frac{\mathcal{L}_{\hat{I}_B}(s)\mathcal{L}_{\hat{I}_R}(s)-\mathcal{L}_{\hat{I}_B}(s)\mathcal{L}_{\hat{I}_R}(s)\mathcal{L}_{S_{D_0}}(s)}{s} \exp(-N_0 s) ds,\\
\end{split}
      \end{align}
      where $\mathcal{L}_{S_{D_0}}(s)=\mathbb{E}[\mathcal{L}_{S_{D_0}|d_0}(s)]$ and  $\mathcal{L}_{S_{D_0}|d_0}(s)={ \frac{1}{1+s \; \hat{P} \beta^2 d_0^{-\alpha} } }$.

 
 \subsection{Energy Efficiency}
   We define the energy-efficiency of a typical user  by dividing the achievable rate with the network power consumption. The 
   energy-efficiency of IRS-assisted user is given as follows:  \begin{align}
   \label{eq:EEIRS}\begin{split}
\mathrm{EE}_{\rm ID}=\frac{\int_{0}^{\infty}\frac{\mathcal{L}_{{I}_B}(s)\mathcal{L}_{{I}_R}(s)-\mathcal{L}_{{I}_B}(s)\mathcal{L}_{{I}_R}(s)\mathcal{L}_{S_{R_0}}(s)}{s} \exp(-N_0 s) ds}{p_{\rm BS}+p_{\rm U}+ P+p_{\rm  IRS}},
\end{split}
\end{align}
which is obtained by dividing \eqref{eq:rateIRS} with $p_{\rm IRS}$.
Similarly, the energy-efficiency of a typical user in the direct communication mode $EE_{\rm D}$ can be given by diving \eqref{eq:RateDirect} with the power consumption in the direct mode $\hat P$ as follows:
 \begin{align}
   \label{eq:EEdirect} \begin{split}
\mathrm{EE}_{\rm D}=\frac{\int_{0}^{\infty}\frac{\mathcal{L}_{\hat{I}_B}(s)\mathcal{L}_{\hat{I}_R}(s)-\mathcal{L}_{\hat{I}_B}(s)\mathcal{L}_{\hat{I}_R}(s)\mathcal{L}_{S_{R_0}}(s)}{s} \exp(-N_0 s) ds}{p_{\rm BS}+p_{\rm U}+ \hat{P}}.
\end{split}
\end{align}

\subsection{Overall Network Coverage, Ergodic Capacity, and Energy Efficiency}
The overall coverage probability of the typical user
 is derived as follows: 
 \begin{align}
 \label{eq:FullCoverage}
  C= (1-\mathcal{A})C_{D}+ \mathcal{A}C_{ID},  \end{align}
  where $\mathcal{A}$ represents the fraction of users in the system performing indirect IRS-assisted transmission, while $(1-\mathcal{A})$ represents the fraction of users performing direct transmission. 
Similarly, the overall achievable rate and energy-efficiency of the typical user can be derived as follows: 
   $R=(1-\mathcal{A})R_{D}+ \mathcal{A}R_{ID},$
   and 
    $\mathrm{EE}=(1-\mathcal{A})\mathrm{EE}_{D}+ \mathcal{A}\mathrm{EE}_{ID},$
    respectively.

    The fraction of IRS-assisted and direct users can be perceived in many ways. For instance, it can be considered that the fraction of IRS-assisted users is proportional to the number of IRSs in the network. In this case, $\mathcal{A}$ can be defined as $\frac{\lambda_R}{\lambda_R+\lambda_B}$. As an example, if there are five BSs and five IRSs, then  $\mathcal{A}=0.5$ assuming that one IRS can at-most provide service to one-user at a time. On the other hand,  the fraction of IRS-assisted users can be considered proportional to the  blocking probability of nearest direct link (as IRS is only associated to BS if there is a blocked direct link). For instance, considering a  Boolean blockage model with the assumption that number of blockages follow Poisson distribution \cite{sayehvand2020interference}, the probability of direct transmission can be given as $\exp(-(\eta d_0+u))$, where $\eta$ and $u$ are defined on the basis of the shape of considered blockages \cite{bai2014analysis}. Subsequently, the probability of blockages can be written as $\mathcal{A}=1-\exp(-(\eta d_0+u))$.
     Considering blockage  the SINR of the direct mode in \eqref{eq:gamaaDirct} modifies as 
      $\gamma_{D}= \mathcal{A}\frac{S_{D_0}}{\hat{I}_{ B}+\hat I_{R}+N_0}$ that results in modified coverage probability  $C_{\rm D}$ in \eqref{eq:PcD} as $$
    C_{\rm {D}}=  \mathbb{E}_{d_0}\left[e^{-  \frac{\tau \;d^{\alpha}_{0} N_0}{\mathcal{A}\beta^{2}\hat{P}} }\;\mathcal{L}_{\hat I_B}\left( \frac{\tau \;d^{\alpha}_{0}}{\mathcal{A}\beta^{2}\hat{P}} \right)\;\mathcal{L}_{ \hat I_R}\left( \frac{\tau \;d^{\alpha}_{0}}{\mathcal{A}\beta^{2}\hat{P}} \right)\right].
$$

\section{Numerical Results and Discussion}

	In this section, we validate the accuracy of our derived expressions  and then obtain useful insights related to different interference scenarios,  the  total number of IRSs in the setup, number of IRS elements and transmission power for different communication modes.  
	
	\subsection{Simulation Parameters}
	
	Unless stated otherwise, the simulation parameters are listed herein. The heights of IRSs and BSs are set to $H_R=10$~m, and $H_B=20$~m, respectively. The coverage radius is $R=700$~m. The transmission power for IRS-assisted mode and direct mode is  $P=20$~W and $\hat{P}=20$~W, respectively. The static power consumption of BS and user is $p_{\rm BS}=40$~dBm and $p_{\rm U}=10$~dBm, respectively \cite{zhang2021trade}. The phase resolution power consumption for 6- bits $p_{r}(6)=78$~mW. The total number of IRS elements per IRS is $N=50$, BS intensity within the coverage area is $\lambda_B=10^{-4}$, and the total number of IRSs in the coverage area $M=1500 $ that corresponds to $\lambda_R=M/\pi R^2\approx 10\times \lambda_B$. Also,  $\lambda_R$ is IRS intensity, path-loss exponent is $\alpha=4$, threshold on SINR $\tau=-10$~dB, and noise power spectral density  is $N_0=10^{-10}$~W/Hz.
	
\subsection{Validation of Analysis}

	\begin{figure*}[t]
	\centering
	
	\includegraphics[scale=0.70]{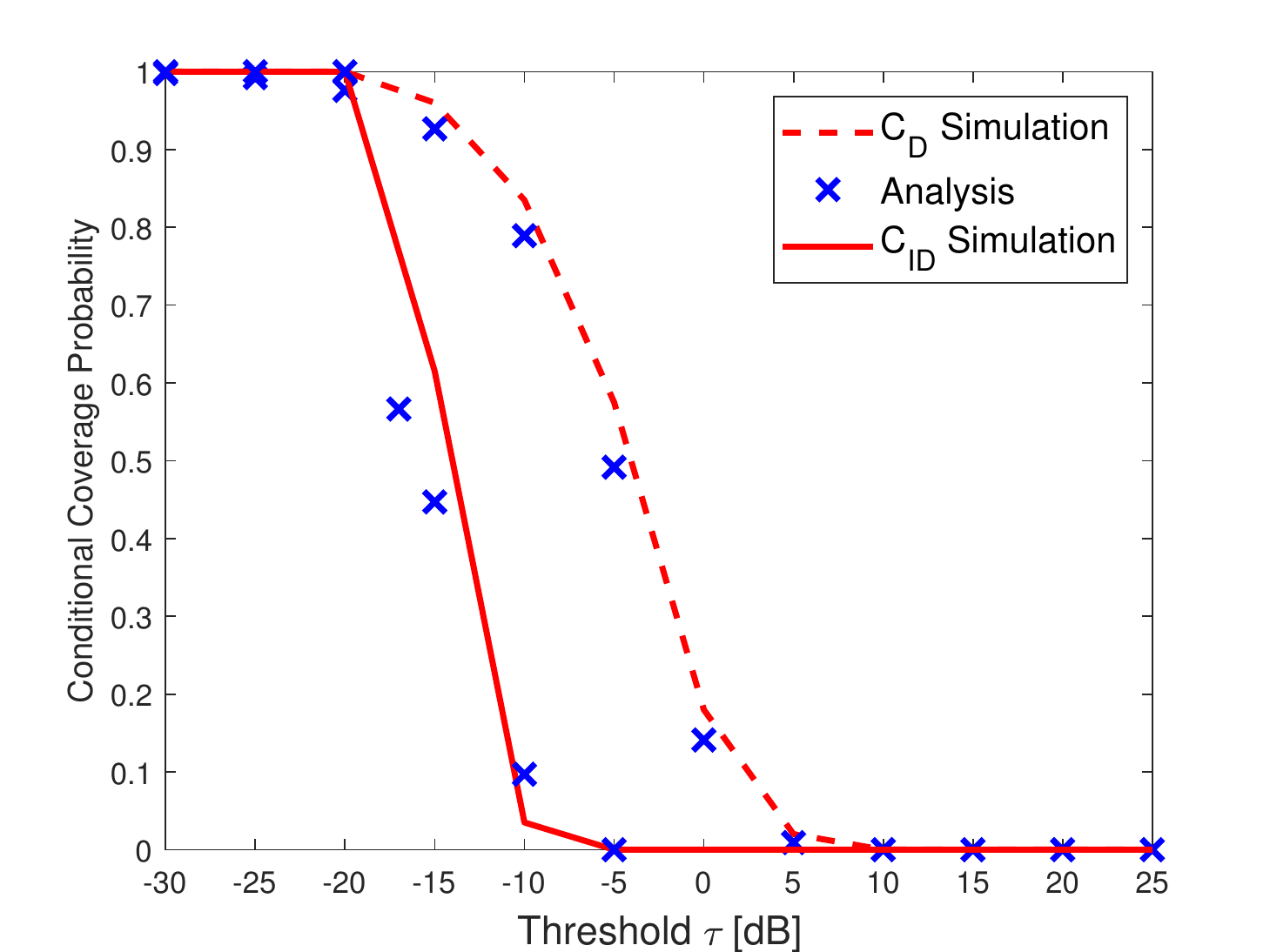}
	\caption{Validation of conditional coverage probability in  IRS-assisted and direct mode of communications derived in \eqref{eq:CID}  and \eqref{eq:PcD}, using Monte-Carlo simulations.}
	\label{fig:Figure_CoverageValidation}	
	\end{figure*}
	
	Fig.~\ref{fig:Figure_CoverageValidation}  compares the  coverage probability of IRS-assisted user and the user supported by the direct transmission as a function of the SINR threshold $\tau$  considering $P=\hat{P}=20$~W. Numerical results show that our  theoretical analysis and Monte-Carlo simulations match well. As expected, the conditional coverage probability decreases with the increase in SINR threshold for both types of users. Nevertheless, the coverage probability of IRS-assisted transmission lags behind the direct transmission even when the intensity of IRSs is higher than the intensity of BSs, i.e.,  $\lambda_R = 10 \lambda_B$. This fact signifies the efficacy of IRS deployments mostly in scenarios when the direct transmission link is blocked.
\subsection{Impact of BS Transmit Power on Direct Communication}	
	
	Fig.~\ref{fig:Figure_8_Rate}  compares the achievable data rate of IRS-assisted communication and the direct mode  considering  $\hat{P}=1$~W and $\hat{P}=5$~W. We observe that for smaller number of IRS elements, direct transmissions outperform the IRS-assisted transmissions.  As the number of IRS elements increases,  R$_{\rm ID}$ increases  because the IRS link gets stronger with more elements. An increase in IRS interference however degrades the achievable data rate  R$_{\rm D}$ in direct links. The figure also depicts that the performance of  IRS-assisted communication starts to exceed direct communication with lower IRS elements if the transmit power of BSs is low as can be seen from switching point $N=30$ and $N=60$ for $\hat{P}=1$ and $\hat{P}=5$, respectively. We note that, for a given deployment density of BSs and IRSs,  IRS-assisted mode is useful for a larger number of IRS elements and low transmit power of BSs in direct mode.   Evidently, a higher transmission power of direct user's BSs degrades IRS-assisted communication, which is opposite for direct communication.

	Similarly, Fig.~\ref{fig:Figure_8_EE} validates the accuracy of energy-efficiency considering $\hat{P}=1$~W and $\hat{P}=5$~W. As expected,  the IRS-assisted mode outperforms the direct mode for $N=40$ and $N=100$, for $\hat{P}=1$~W and $\hat{P}=5$~W, respectively.  Compared to  $\hat{P}=5$~W, energy-efficiency is lower for $\hat{P}=1$~W.
		\begin{figure*}[t]
	\begin{minipage}{0.48\textwidth}
		\includegraphics[scale=0.620]{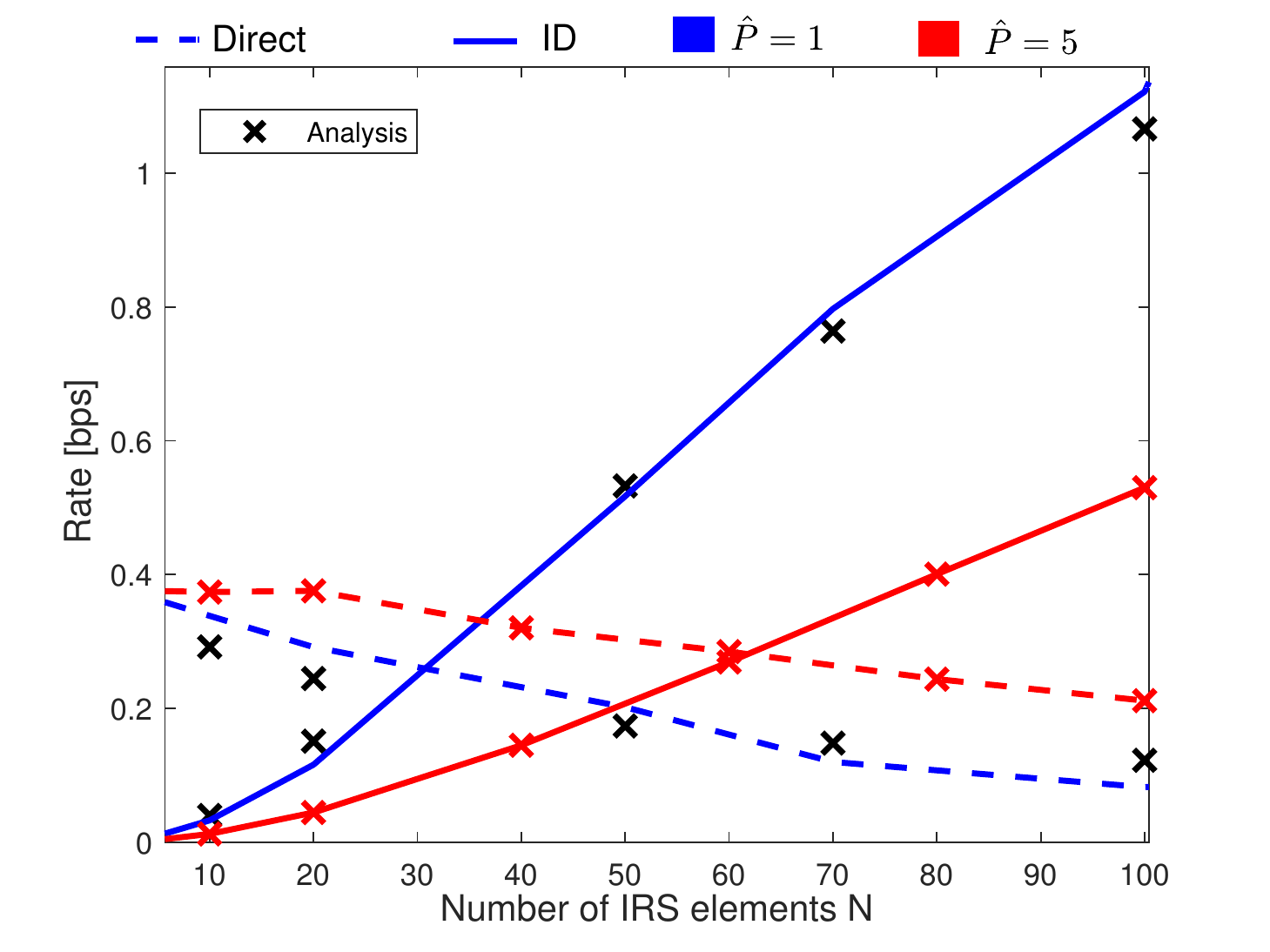}
	\caption{Analytical and simulation results on conditional achievable rate in  IRS-assisted and direct mode of communications derived in \eqref{eq:rateIRS} and \eqref{eq:RateDirect} with respect to IRS elements (for $\hat{P}=1$  and  $\hat{P}=5$).}
	\label{fig:Figure_8_Rate}
		\end{minipage}\hfill
		\begin{minipage}{0.48\textwidth}
			\includegraphics[scale=0.620]{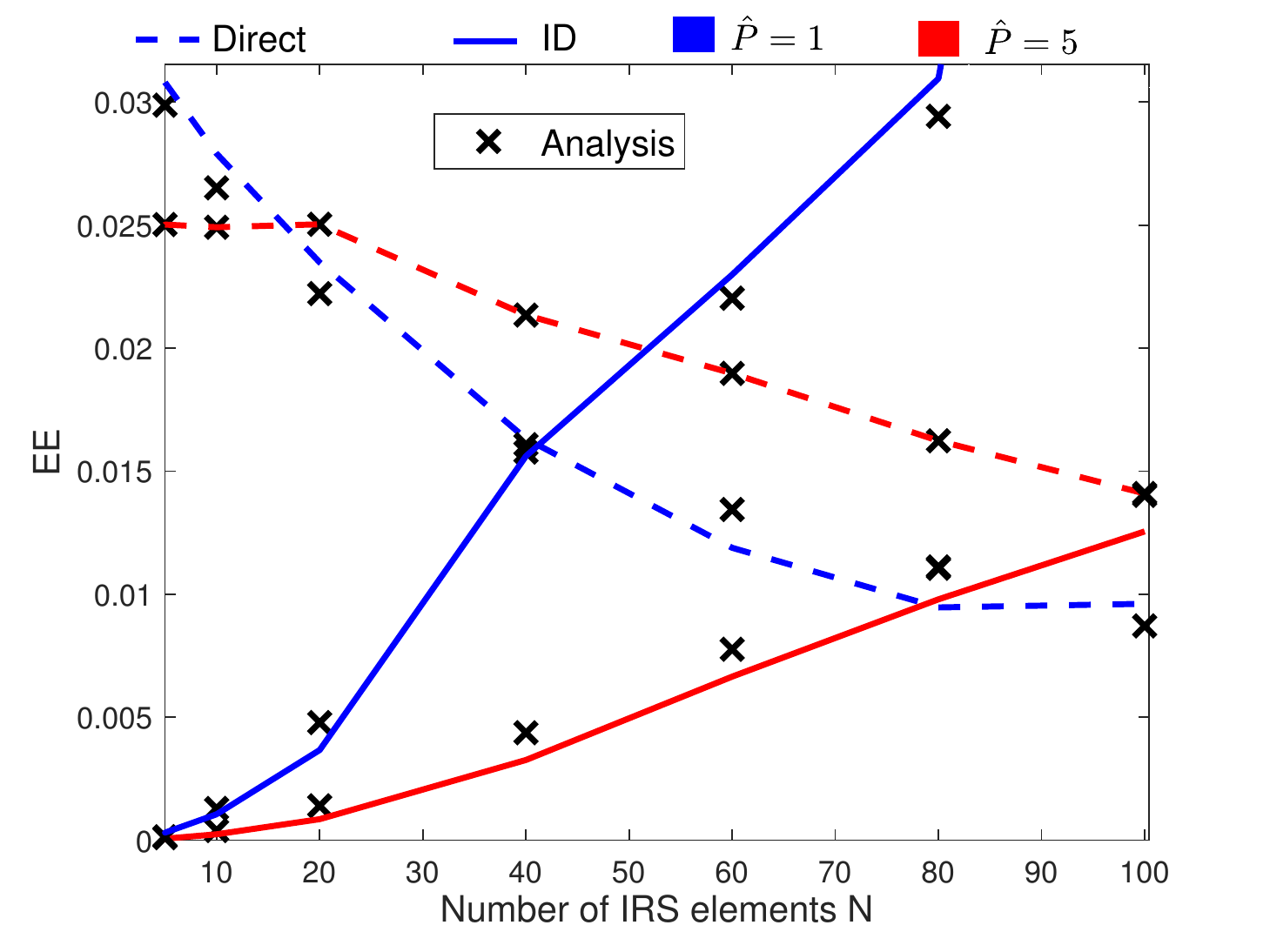}
	\caption{Validation of conditional EE for  IRS-assisted and direct mode of communications derived in \eqref{eq:EEIRS}and  \eqref{eq:EEdirect}, using Monte-Carlo simulations (for different number of IRS elements, $\hat{P}=1$  and  $\hat{P}=5$).}
	\label{fig:Figure_8_EE}	
	\end{minipage}\hfill
	\end{figure*}

\subsection{Impact of IRS Intensity on Direct and IRS-Assisted Communications}	
	\begin{figure*}[t]
	\begin{minipage}{0.48\textwidth}
			\includegraphics[scale=0.620]{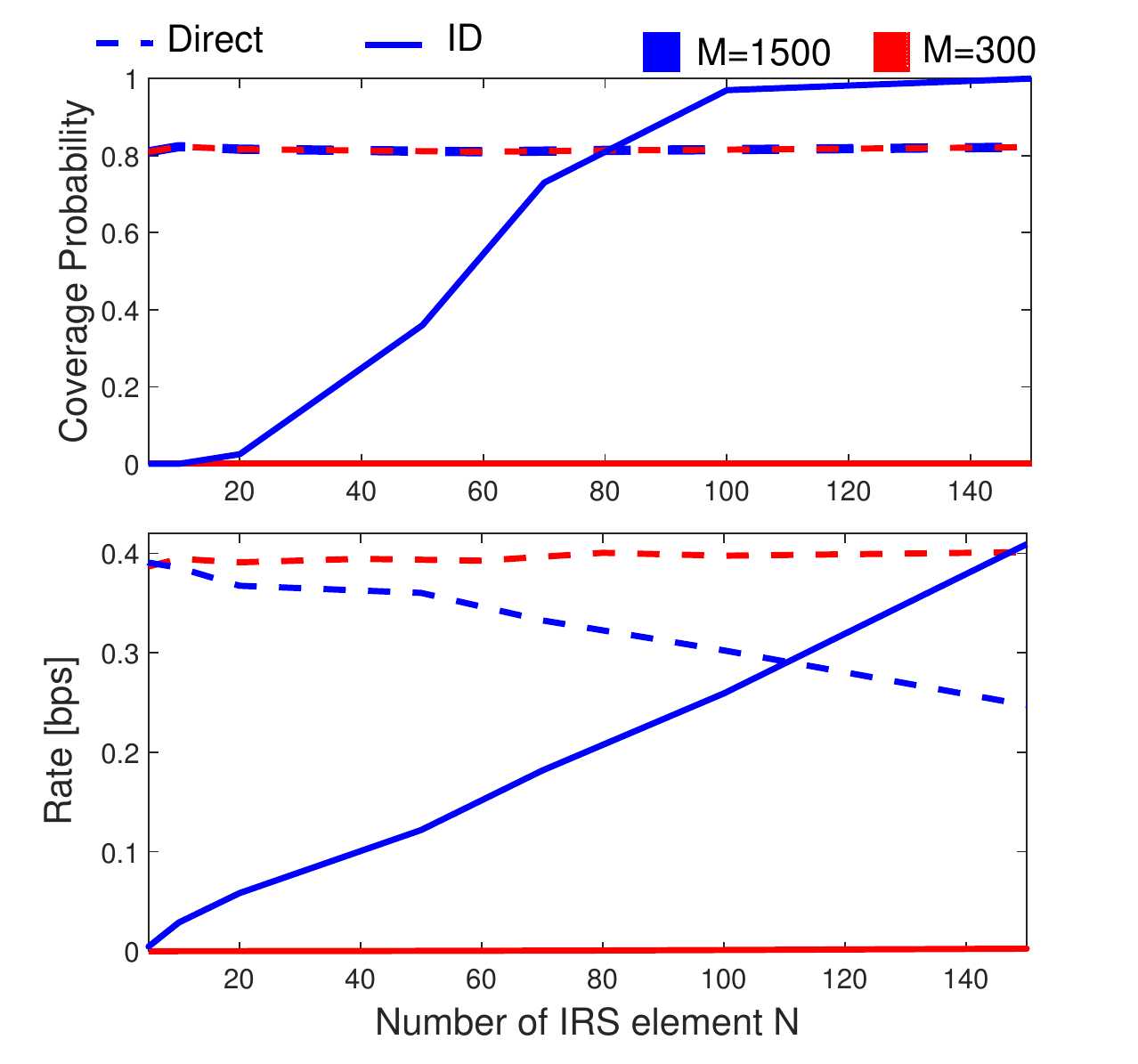}
		\caption{Comparison of conditional coverage probability and achievable rate for IRS-assisted mode and direct mode of communications with respect to number of IRS elements (for total number of IRSs $M=300$ and $M=1500$).
}
		\label{fig:Figure_CoverageSimCompare}	
		\end{minipage}\hfill
		\begin{minipage}{0.48\textwidth}
				\includegraphics[scale=0.620]{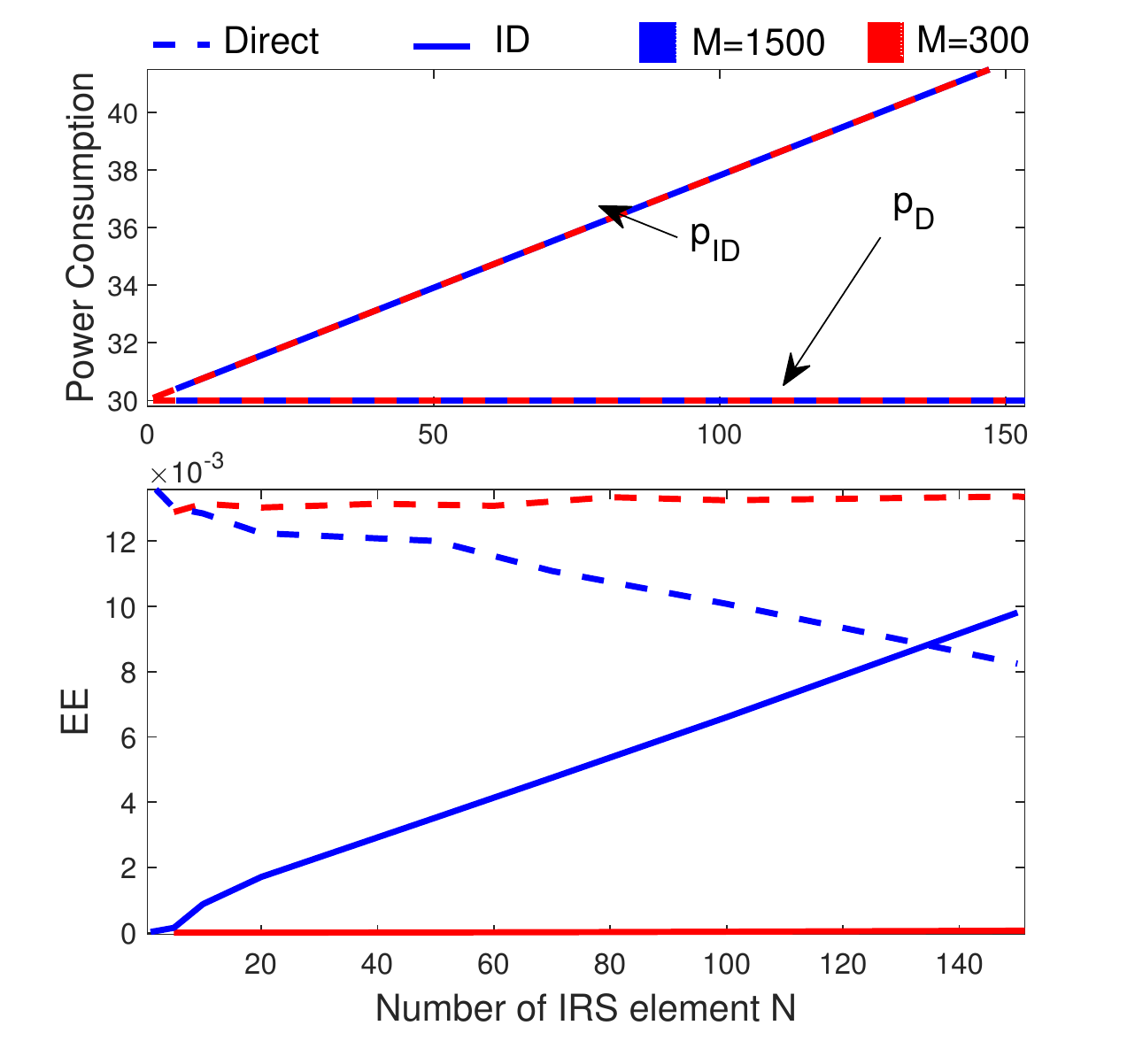}
	\caption{Comparison of power consumption and conditional EE for IRS-assisted mode and direct mode of communications with respect to number of IRS elements (for total number of IRSs $M=300$ and $M=1500$).
}
	\label{fig:Figure_EE_SimCompare}	
	\end{minipage}\hfill
	\end{figure*}

Fig.~\ref{fig:Figure_CoverageSimCompare} compares the coverage probability and rate for direct and IRS-assisted  communication as a function of the total number of IRS elements and IRSs with in the cell radius. We note that varying the number of IRS elements per IRS  have no significant impact on the coverage probability and rate for sparse deployment of IRSs $M=300$. However, the  coverage probability $C_{\rm ID}$ and achievable rate $R_{\rm ID}$ increases with the increase  in number of IRS elements for dense deployment of IRSs $M=1500$. This is encouraging as it shows that the impact of interference due to dense deployment of IRSs is not significant. On the other hand, the rate of the direct communication  decreases with the increasing IRS elements, especially for dense deployment of IRSs since the IRS interference becomes significantly dominant.  

Fig.~\ref{fig:Figure_EE_SimCompare} shows power consumption and EE for the IRS-assisted and the direct  modes of communication with respect to the number of IRSs  $M=300$ and $M=1500$. The figure presents that the $ p_{\rm ID}$ increases with the increase in $N$ as expected since $p_{\rm ID} \propto N$. However, the direct mode power consumption $p_{\rm D}$ remains same since   $p_{\rm D}$ is not the function of $N$. It is also clear that $M$ does not have any impact on the power consumption since $p_{\rm ID}$ is defined based on total system power consumption per user (refer to Section~\ref{subsec:PowerModel}) and a user is assumed to be connected with only one IRS at a time.  The  energy efficiency follows the same trend as conditional rate yet with the smaller slope due to the  increasing power of indirect mode that appears in the denominator of EE. 


	\begin{figure*}[t]
	\begin{minipage}{0.48\textwidth}
			\includegraphics[scale=0.630]{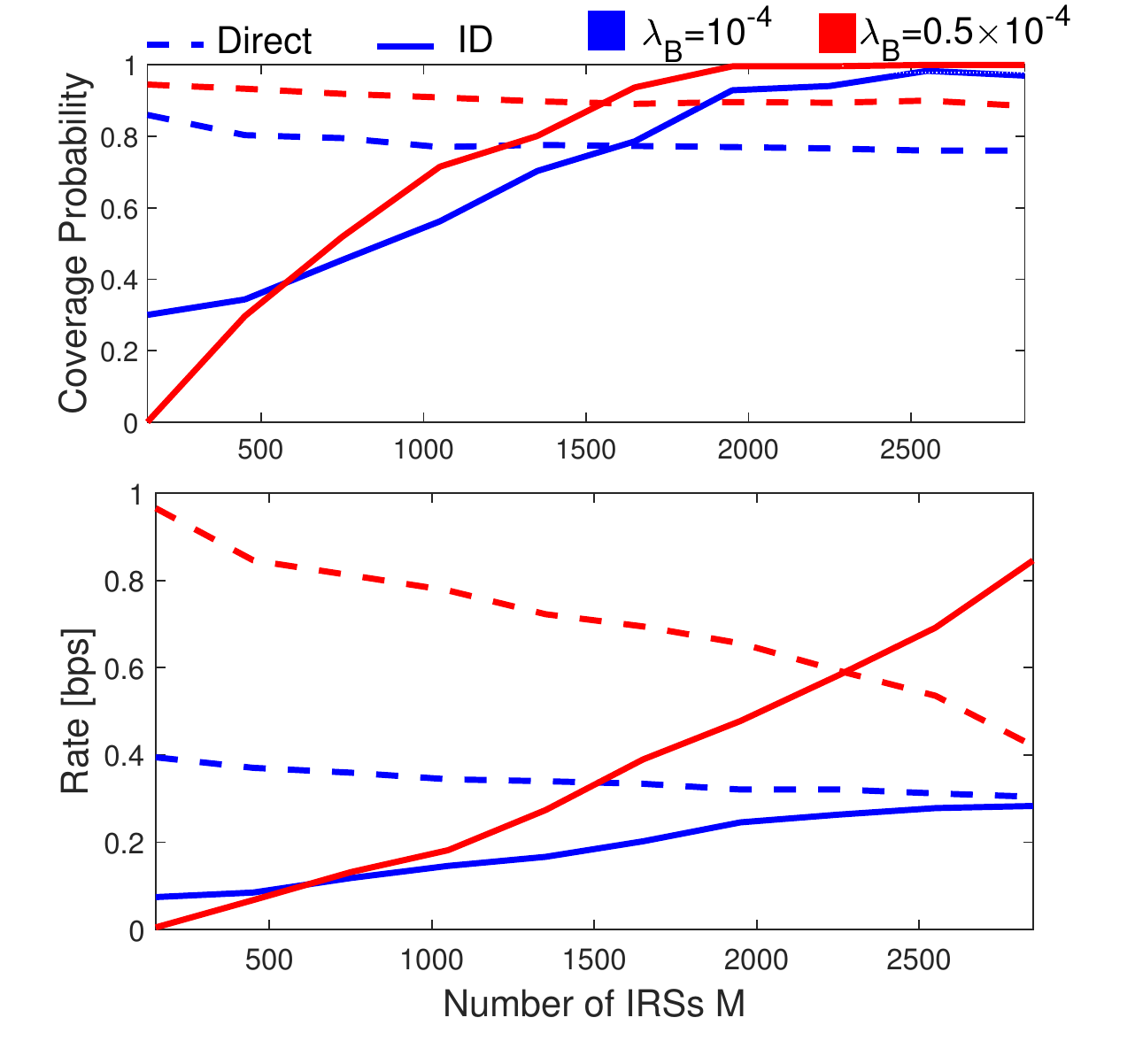}
		\caption{Comparison of conditional coverage probability and achievable rate  for IRS-assisted mode and direct mode of communications with respect to total number of IRSs  (for BS intensity  $\lambda_B=10^{-4}$ and $\lambda_B=0.5 \times 10^{-4}$, and $N=100$).
}
		\label{fig:FigureSetcoverageForIRS}	
		\end{minipage}\hfill
		\begin{minipage}{0.48\textwidth}
			\includegraphics[scale=0.630]{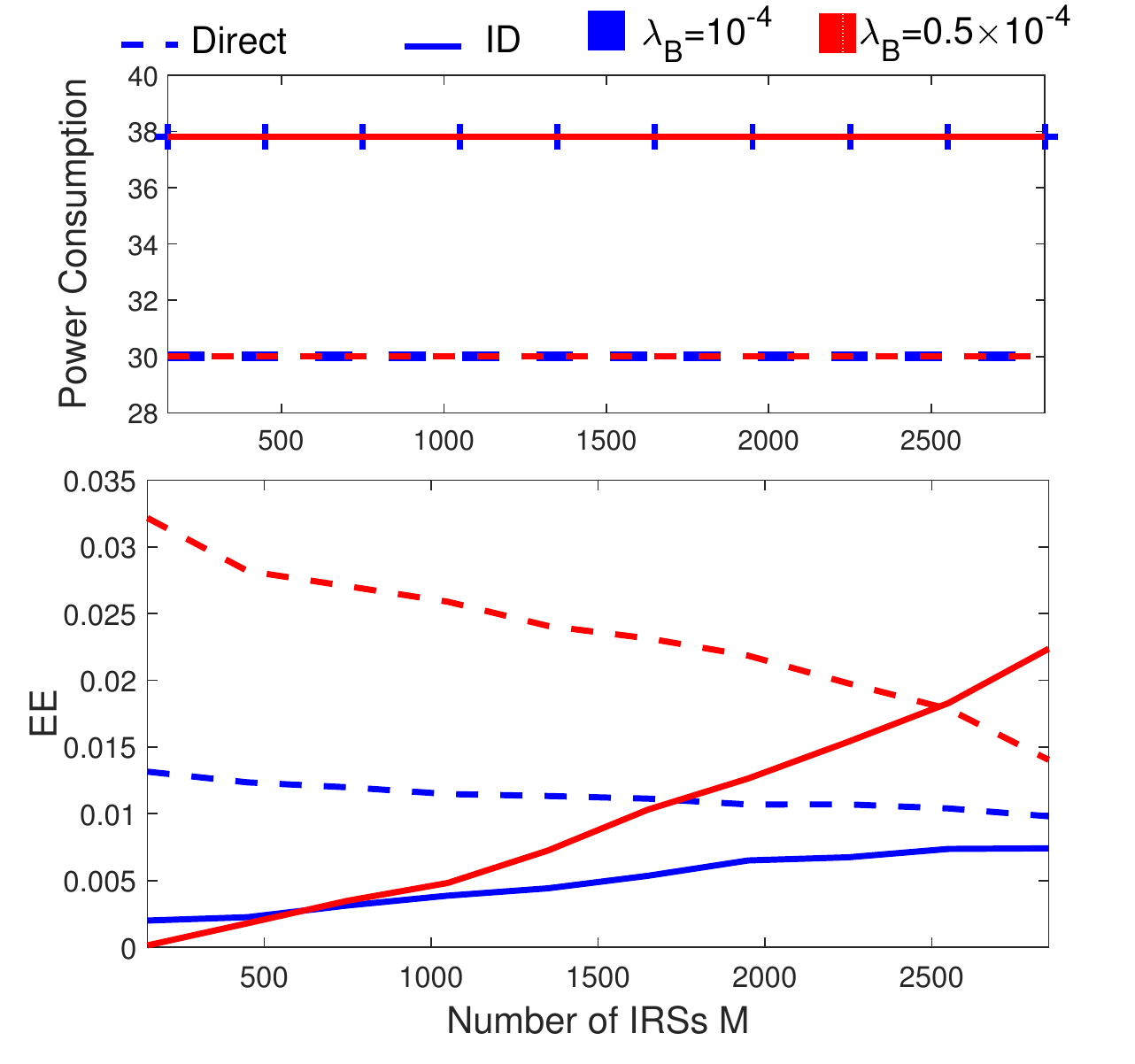}
	\caption{Comparison of power consumption and conditional EE  for IRS-assisted mode and direct mode  with respect to total number of IRSs (for BS intensity  $\lambda_B=10^{-4}$ and $\lambda_B=0.5 \times 10^{-4}$, and $N=100$).
}
	\label{fig:FigureSetEEForIRS}	
	\end{minipage}\hfill
	\end{figure*}
	
	\subsection{Impact of BS Intensity on Direct and IRS-Assisted Communications}	
Fig.~\ref{fig:FigureSetcoverageForIRS} compares the  coverage probability and ergodic capacity for 
IRS-assisted and direct communication with respect to total number of IRSs  in the coverage area for BS intensity  $\lambda_B=10^{-4}$ and $\lambda_B=0.5 \times 10^{-4}$. We observe that $C_{\rm ID}$ increases as total number of IRSs in the cell increases. Also, a very subtle decrease in  $C_{\rm D}$ is observed for both $\lambda_B=10^{-4}$ and $\lambda_B=0.5 \times 10^{-4}$. This is because, as M increases, the IRS density increases and  the nearest IRS becomes closer to the user that corresponds to smaller $r_{0,0}$ and higher IRS received signal power that leads to improvement in $C_{\rm ID}$. Also, an increases in $M$ increases the interference coming from  the IRSs for the direct user resulting in a slight decrease in $C_{\rm D}$. The figure also shows that a more sparse BS deployment leads to a smaller coverage probability of direct communication mode, and  indirect coverage $C_{\rm ID}$ outperforms direct mode coverage for  $M>1700$ for both the values of $\lambda_B$.  A similar trend can be observed for the  achievable rate. This implies that density of deployment of IRSs (i.e., sparse BS deployment or dense IRS deployment) plays a significant role in the performance of 
IRS-assisted mode.

Fig.~\ref{fig:FigureSetEEForIRS} presents results on power consumption and EE for the direct and indirect  modes. Fig.~\ref{fig:FigureSetEEForIRS} follows the same trend of achievable rate as in Fig.~\ref{fig:FigureSetcoverageForIRS} with the difference in the slope of EE$_{\rm ID}$. 

	\begin{figure*}[t]
	\begin{minipage}{0.48\textwidth}
		\includegraphics[scale=0.630]{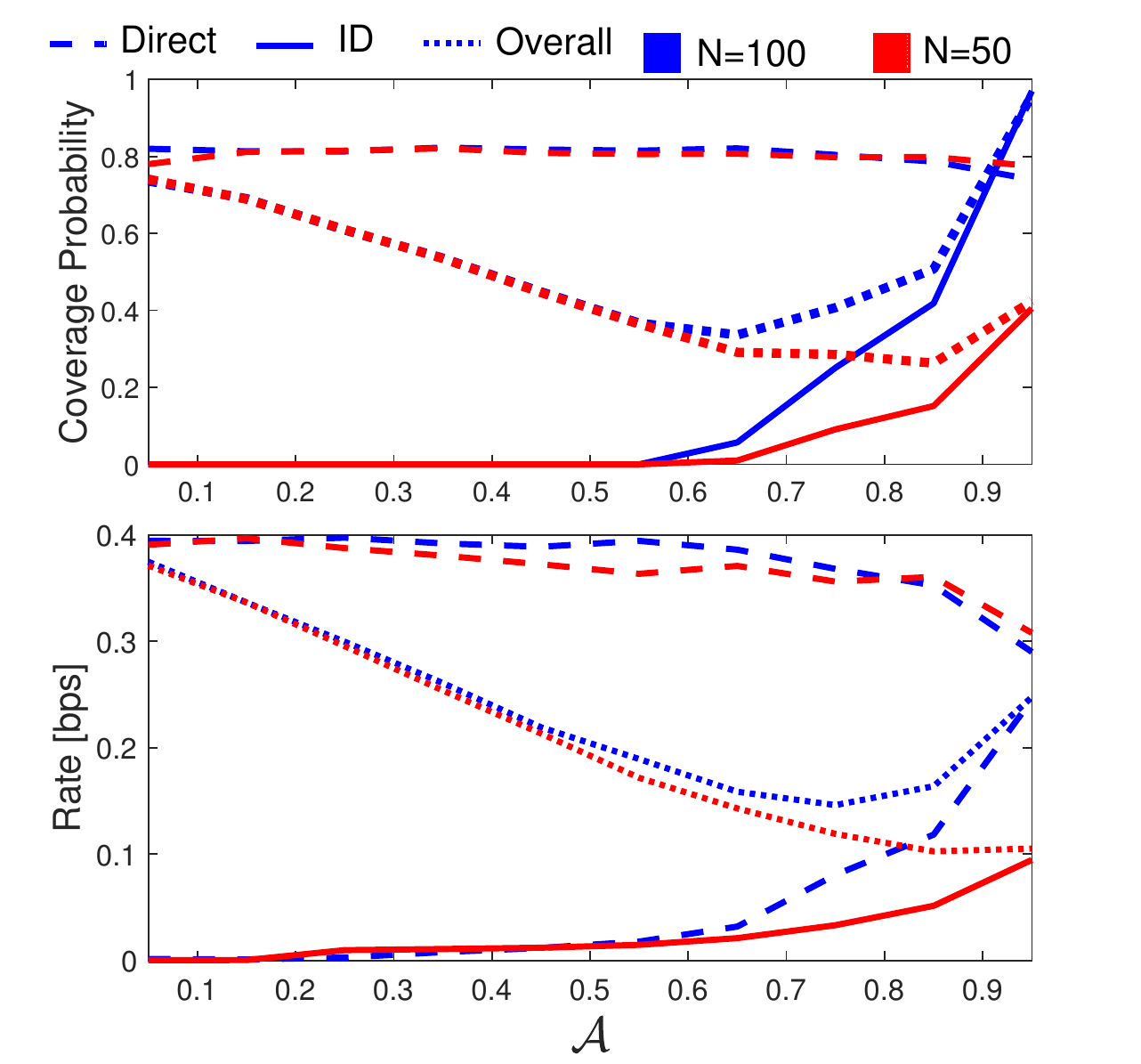}
		\caption{Comparison of conditional coverage probability and achievable rate  for IRS-assisted mode and direct mode, and overall performance with respect to the fraction of  users assisted by IRS $\mathcal{A}$ (for number of IRS elements  $N=50$ and $N=100$ per IRS surface).
}
		\label{fig:Figure_CoverageSimvsTransmissionModeFinal}	
		\end{minipage}\hfill
		\begin{minipage}{0.48\textwidth}
				\includegraphics[scale=0.630]{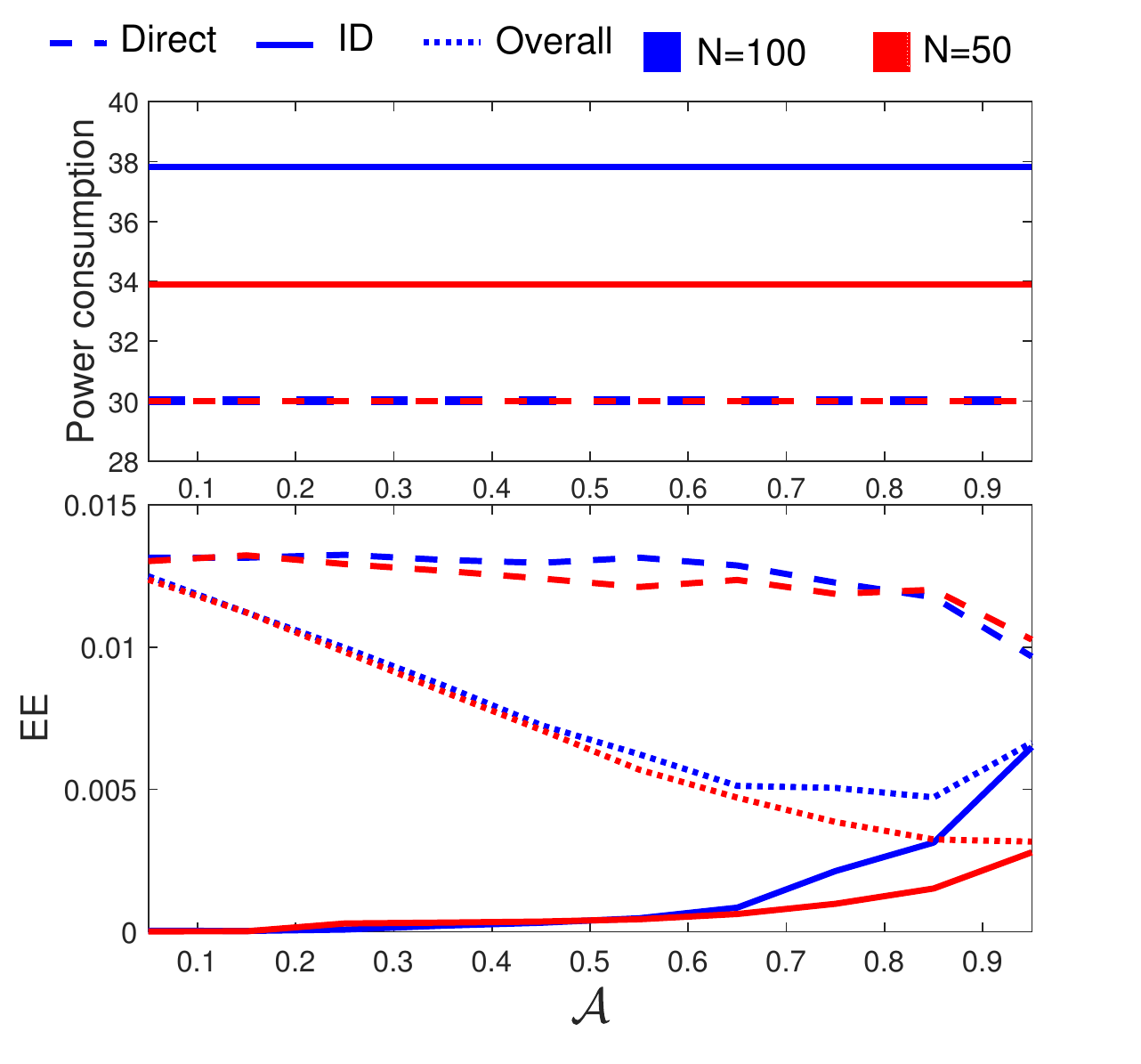}
	\caption{Comparison of power consumption, conditional energy-efficiency for IRS-assisted mode, direct mode, and overall EE  with respect to the fraction of  users assisted by IRS $\mathcal{A}$  (for number of IRS elements $N=50$ and $N=100$ per surface).
}
	\label{fig:Figure_EESimvsTransmissionModeFinal}	
	\end{minipage}\hfill
	\end{figure*}

Fig.~\ref{fig:Figure_CoverageSimvsTransmissionModeFinal} shows the impact of $\mathcal{A}$ on different  system performance measures. The coverage probability of IRS-assisted communication $C_{\rm ID}$ increases with $\mathcal{A}$ because this increases $\lambda_R=\frac{\mathcal{A}}{1-\mathcal{A}}\lambda_B$.  
The overall system coverage probability $P_C$ follows $C_{D}$ when $\mathcal{A}\approx 0$ which corresponds to very few or no IRS in the system. However, $P_C$ decreases up to $\mathcal{A}=0.6$ and then it starts to increase and converges to $C_{ID}$ when $\mathcal{A} \approx 1$ for $N=100$. Note that, for $\mathcal{A} \approx 0.95$, $\lambda_R=20 \lambda_B$. Moreover,  a decrease in  direct coverage probability  $C_D$ is also visible due to the aggregate interference coming from IRS. A similar trend is observed for $N=50$ with poorer  $C_{ID}$ than $C_{D}$ due to fewer IRS elements compared to the case when $N=100$. Also,
Fig.~\ref{fig:Figure_EESimvsTransmissionModeFinal} shows a similar 	trend in achievable rate because the power consumption does not change significantly.

\section{Conclusion}
	We have analyzed the downlink coverage probability, ergodic capacity, and energy-efficiency  performance for cellular networks under multi-BS and multi-IRS setup considering both the IRS-assisted communication and direct communication modes. We have observed  that using a larger number of IRS elements per IRS are crucial for IRS-assisted communication to outperform direct communication. Also, we have observed that   IRS-assisted communication becomes  dominant when   IRSs are densely deployed (i.e.,  when IRS intensity is larger than BS intensity). Also, for dense IRS deployment, the impact of IRS-interference  significantly decreases the performance of direct communication  and enhances IRS-assisted communication because the nearest IRS becomes closer to user.  Our results also have demonstrated the impact of fraction of indirect IRS-assisted users on the overall system performance and given insights on how to  select the proportion of direct or indirect IRS-assisted users in the network to achieve the desired trade-off between the degradation of direct communication and massive connectivity. The work can be extended to investigate the impact of multi-antennas at the BSs and the user devices. 
	




	\bibliographystyle{IEEEtran}
	\bibliography{IEEEabrv,Taniya_REF_2019_UAVIRS}




\end{document}